\documentclass[11pt, a4paper]{article}
\pdfoutput=1 %

\usepackage{fullpage}

\usepackage{mdframed}

\usepackage[utf8]{inputenc}
\usepackage[T1]{fontenc}

\usepackage{libertine}

\usepackage{amsfonts}
\usepackage{amsmath}
\usepackage{amssymb}
\usepackage{amsthm}
\usepackage{float}
\usepackage{stmaryrd}

\usepackage{cmap}
\usepackage{hyperref}
\usepackage[svgnames]{xcolor}
\hypersetup{colorlinks={true},urlcolor={blue},linkcolor={DarkBlue},citecolor=[named]{DarkGreen}}
\usepackage[authoryear,square]{natbib}

\usepackage{xspace}
\usepackage{fancyhdr}
\usepackage{tcolorbox}

\usepackage{microtype}
\usepackage[capitalise,nameinlink]{cleveref}
\usepackage{enumitem}

\usepackage{cleveref}

\usepackage{doi}

\usepackage{authblk}

\usepackage{booktabs} %
\usepackage[ruled,linesnumbered]{algorithm2e} %
\usepackage{setspace}

\SetAlFnt{\small}
\SetAlCapFnt{\small}
\SetAlCapNameFnt{\small}
\SetAlCapHSkip{0pt}
\IncMargin{-\parindent}

\usepackage{tikz}  %
\usetikzlibrary{arrows}
\usetikzlibrary{patterns,snakes}
\usetikzlibrary{decorations.shapes}
\tikzstyle{overbrace text style}=[font=\tiny, above, pos=.5, yshift=5pt]
\tikzstyle{overbrace style}=[decorate,decoration={brace,raise=5pt,amplitude=3pt}]
\usetikzlibrary{shapes.geometric}

\usepackage{authblk}
\usepackage{xspace}

\sloppy

\theoremstyle{definition}
\newtheorem{definition}{Definition}

\theoremstyle{plain}
\newtheorem{theorem}{Theorem}[section]

\newtheorem{corollary}[theorem]{Corollary}
\newtheorem{proposition}[theorem]{Proposition}

\newtheorem{observation}{Observation}
\newtheorem{claim}{Claim}
\newenvironment{claimproof}[1]{\par\noindent\emph{Proof.}\hspace{0.15cm}#1}{\hfill $\blacktriangleleft$\smallskip}

\theoremstyle{definition}

\newcommand{\ical}{\ensuremath{\mathcal{I}}\xspace}

\newcommand{\ocal}{\ensuremath{\mathcal{O}}\xspace}

\newcommand{\piswap}{\ensuremath{\pi^{i \leftrightarrow j}}\xspace}
\newcommand{\lambdaswap}{\ensuremath{\lambda^{i \leftrightarrow j}}\xspace}

\newcommand{\NP}{\textsf{NP}\xspace}
\newcommand{\NPh}{\NP-hard\xspace}
\newcommand{\NPc}{\NP-complete\xspace}
\newcommand{\FPT}{\textsf{FPT}\xspace}
\newcommand{\XP}{\textsf{XP}\xspace}
\newcommand{\W}[1][1]{\textsf{W[#1]}\xspace}
\newcommand{\Wh}[1][1]{\W[#1]-hard\xspace}

\newcommand{\Oh}[1]{{\ocal\left(#1\right)}}

\newcommand{\FDwE}[1]{\if\relax\detokenize{#1}\relax\else#1-\fi\textsc{Fair Division with Externalities}\xspace}
\newcommand{\FD}[1]{\if\relax\detokenize{#1}\relax\else{\normalfont #1-}\fi\textsc{Fair Division}\xspace}

\newcommand{\types}{\ensuremath{\tau}}
\newcommand{\items}{\ensuremath{A}}
\newcommand{\agents}{\ensuremath{N}}

\usepackage{todonotes}

\allowdisplaybreaks

\title{\bf The Complexity of Fair Division of Indivisible Items with Externalities\footnote{An extended abstract of this work has been published in the proceedings of the 38th AAAI Conference on Artificial Intelligence, AAAI '24~\citep{DeligkasEKS2024}.}}

\author[1]{Argyrios Deligkas}
\author[1]{Eduard Eiben}
\author[2]{Viktoriia Korchemna}
\author[3]{Šimon Schierreich}

\affil[1]{Royal Holloway, University of London, United Kingdom}
\affil[2]{TU Wien, Austria}
\affil[3]{Czech Technical University in Prague, Czechia}

\date{\today}

\begin{document}

\maketitle
\thispagestyle{empty}

\begin{abstract}
We study the computational complexity of fairly allocating a set of indivisible items under externalities.
In this recently-proposed setting, in addition to the utility the agent gets from their bundle, they also receive utility from items allocated to other agents.
We focus on the extended definitions of envy-freeness up to one item (EF1) and of envy-freeness up to any item (EFX), and we provide the landscape of their complexity for several different scenarios. 
We prove that it is \NP-complete to decide whether there exists an EFX allocation, even when there are only three agents, or even when there are only six different values for the items.
We complement these negative results
by showing that when both the number of agents and the number of different values for items are bounded by a parameter the problem becomes fixed-parameter tractable. 
Furthermore, we prove that
two-valued and binary-valued instances are equivalent and that EFX and EF1 allocations coincide for this class of instances.
Finally, motivated from real-life scenarios, we focus on a class of structured valuation functions, which we term agent/item-correlated. 
We prove their equivalence to the ``standard'' setting without externalities. 
Therefore, all previous results for EF1 and EFX apply immediately for these valuations.
\end{abstract}

\section{Introduction}
\label{sec:intro}
The {\em allocation} of a set of {\em indivisible} resources, e.g., objects, tasks, responsibilities, in a {\em fair} manner is a question that has received a lot of attention through history. %
In the last decades though, economists, mathematicians, and computer scientists have systematically started studying the problem with the aim of providing {\em formal fairness guarantees}~\citep{LiptonMMS04,BouveretL08,Budish11,CaragiannisKMPSW19}; for excellent recent surveys on the topic, see~\citep{AmanatidisABFLMVW2023} and~\citep{NguyenR2023}. 
However, despite the significant efforts on this quest, the nature of the problem, i.e., the indivisibility, has not allowed yet for a universally adopted solution concept.

Typically, an instance of the fair division problem consists of a set of indivisible items, and a set of agents each of whom has their own {\em valuation function}.
The task is to {\em partition} the items into {\em bundles} and allocate each bundle to an agent such that from the point of view of {\em every} agent this allocation is ``fair''.
Here, the ``fair'' part is a mathematical criterion that has to be satisfied by the valuation function of every agent.

Traditionally, in the majority of previous works, the mathematical criterion of fairness for each agent depends {\em only} on a pairwise comparison between {\em bundles}. 
Put simply, each agent cares, i.e., derives value, {\em only} about the bundle they receive, and they compare it against the bundle of any other agent.
However, in many real-life situations this assumption is not sufficient due to inherent underlying {\em externalities}.

Consider for example the scenario where there is a set of admin tasks that have to be assigned to the faculty members of a CS department.
There could exist certain tasks such that some faculty members are objectively better qualified for them -- and would even enjoy doing them -- while other faculty members are not that suited for them. Here, {\em every} faculty member evaluates the allocation as a {\em whole}, since they are affected, either positively or negatively, by the quality of completion of (almost) all tasks.

As a different example, assume that the agents are a priori partitioned into two teams, Team A and Team B, that compete against each other and consider a specific agent in Team A. 
Then, for any resource the agent considers as {\em good}, they will get positive value if it is allocated to Team A -- maybe the value is discounted compared to the value the agent would get if they got the item -- while they get zero, or even negative, value if the resource is allocated to Team B. 
At the same time, for any resource/task that the agent considers that will decrease the efficiency of the team, i.e. they view it as a {\em chore}, they would get negative value if it is allocated to some other agent from Team A.

Motivated by real-life scenarios like the two above, \citet{efx-externalities} recently proposed a new model suitable to capture the situations where externalities occur; interestingly, for \emph{divisible} items, the first models that incorporate externalities were proposed many years ago~\citep{BranzeiPZ13,LiZZ15}. 
The foundational principle of their model is that the agents have additive valuations over the items where, for every item $a$, agent $i$ derives value $V_i(j,a)$ if agent $j$ gets the item; here $j$ can be equal to~$i$. 
This way, every agent evaluates the {\em entire} allocation and not just their bundle.
Furthermore, in view of the more general valuation functions~\citeauthor{efx-externalities} appropriately extended the most prominent fairness concepts for indivisible items: \emph{envy-freenes} (EF), \emph{envy-freeness up to one item} (EF1)~\citep{LiptonMMS04,Budish11}, and \emph{envy-freeness up to any item} (EFX)~\citep{CaragiannisKMPSW19}. Intuitively, they are defined as follows.

An allocation is EF1 if for agent $i$ that prefers the allocation where they swap bundles with agent $j$, there exists one item in either of the bundles of agents $i$ and $j$ (depending on whether it is a good or a chore), such that by removing it agent $i$ does not longer prefers the allocation with the bundles swapped.
An allocation is EFX if instead of removing {\em some} item from the bundles of agents $i$ and $j$ in order to eliminate agent's $i$ preference towards the allocation with swapped bundles, it suffices to remove {\em any} item from the same bundles that {\em strictly} decreases the envy of agent $i$ towards the allocation.

In contrast to the basic setting without externalities, where EF1 allocations always exist and EFX allocations are guaranteed to exist for a few settings, \citet{efx-externalities} showed that things become significantly more complicated in the presence of externalities.
While it is currently unknown, and a major open problem, whether in the basic setting EFX allocations always exist\footnote{The problem is open only for goods or only for chores. Recently,~\citet{HosseiniSVX23-lexicographic} 
resolved the problem for mixed items.}, \citet{efx-externalities} show that there exist instances with externalities without any EFX allocation! 
However, for those instances where existence of an EFX allocation is not guaranteed, \citeauthor{efx-externalities} do not provide any results for the associated computational problem, i.e., decide whether a fair allocation exists or not. 
We resolve this open problem and we deep dive into the uncharted waters of the computational 
complexity of fair division with externalities.

\subsection{Our Contribution}  

We begin our study of the complexity of fair division with externalities by proving that it is intractable to decide whether a given instance admits an EFX allocation, even for very restricted settings.
Firstly, we show that it is \NPc to solve the problem, even when there are only three agents.
This paints a clear dichotomy between tractable and intractable cases, as \citet{efx-externalities} showed that for two agents an EFX allocation can always be found in polynomial time. This result also shows that fair division with externalities is significantly harder compared to the standard setting without externalities, where polynomial algorithm is known for two agents~\citep{GoldbergHH2023}, and pseudo-polynomial algorithms exist for instances with three~\citep{ChaudhuryGM2020} and partly with four agents~\citep{BergerCFF2022,GhosalPNV2023}.

Next, we restrict the problem at a different dimension and we turn our attention to instances with valuations that use only a small number of values. We prove that the problem remains \NP-complete even if the domain of the valuation function consists of $6$ different values.
It is also worth mentioning that in our hardness constructions we do not exploit the presence of chores, as is common in standard fair division settings~\citep{HosseiniMW2023}.

In light of our hardness lower bounds, we use the framework of parameterized complexity~\citep{Niedermeier2006,DowneyF2013,CyganFKLMPPS2015} to reveal at least some tractable fragments of the problem. It should be pointed out that this framework has become \emph{de facto} standard approach when dealing with \NPh problems in AI, ML, and computer science in general~\citep{KroneggerLPP2014,IgarashiBE2017,BredereckKKN2019,GanianK2021,DeligkasEGHO2021,BlazejGKPSS2023}. 
Roughly speaking, in this framework, we study the %
complexity of a problem not merely with respect to the input size~$n$, but also assuming additional information about the instance captured in the so-called \emph{parameter}\footnote{We provide a formal introduction to parameterized complexity in Preliminaries.}.

We start our algorithmic journey with the combined parameter: the number of {\em item types} and the number of agents. 
Intuitively, two items are of the same type if all agents value them the same if they are allocated to a distinct agent $j$; this parameter was recently initiated by \citet{GorantlaMV2023} for goods and by \citet{AzizLRS2023} for chores. 
As our results indicate, the combination of these two parameters is necessary in order to achieve fixed parameter tractability; our first negative result holds just for three agents, and our second result produces an instance with just three different item types.
Hence, our algorithm is the best possible one could hope for, and actually it is capable of finding also EF and EF1 allocations, if they exist.

Moreover, our algorithm serves as the foundation for our second positive result, which is an efficient procedure deciding the existence of an EFX/EF1/EF allocation for the combined parameter the number of agents and the number of different values in agents' preferences. The latter parameter naturally captures widely studied binary valuations~\citep{BarmanKV2018,FreemanSVX2019,HalpernPPS2020,BabaioffEF2021,SuksompongT2022}, bi-valued valuations~\citep{EbadianPS2022,GargMQ2022}, and was previously used by~\citet{AmanatidisBFHV2021} and \citet{GargM2023}.

Next, we move to instances with structured valuations. Following the approach of \citet{efx-externalities}, we start with binary valuations, i.e., they have $\{0,1\}$ as domain. First, we show that instances where every agent uses only two different values (which can be different for every agent) are, in fact, equivalent to binary valuations. 
Additionally, and more importantly, we show that for binary valuations EFX and EF1 allocations coincide. Thus, using the existential result of \citeauthor{efx-externalities}, we establish the existence of EFX allocations for three agents with binary valuations and no chores.

Finally, we introduce and study a different class of structured valuations which we term {\em agent/item-correlated} valuations.
Intuitively, under agent-correlated valuations, an agent $i\in\agents$ receives utility $v_{i,a}$ if an item $a$ is given to her, and~$\tau_{i,j}$ fraction of $v_{i,a}$ if the item is allocated to agent $j$. 
Item-correlated valuations are similar; however, the fractional coefficient depends on the item which is allocated not the agent who gets it. We show that instances with agent/item-correlated valuations can be turned into equivalent instances of fair division without externalities with the same sets of agents and items.  To conclude, we show how the agent- and item-correlated preferences capture many real-life scenarios such as team preferences~\citep{IgarashiKSS2023}.
 
\section{Preliminaries}
\label{sec:prelims}

We will follow the model and the notation of~\citet{efx-externalities}. There is a set of indivisible items $\items=\{a_1, a_2, \ldots, a_m\}$ and a set of agents $N=\{1, 2, \ldots, n\}$. 
An allocation $\pi = (\pi_1, \pi_2, \ldots, \pi_n)$ is a partition of the items into $n$ possibly empty sets, i.e., $\pi_i \cap \pi_j=\emptyset$ for every $i \neq j$ and $\bigcup_{i\in N}\pi_i = \items$, where set $\pi_i$ is allocated to agent $i$. 
Let $\Pi$ denote the set of all allocations. 
For any item $a \in \items$, denote $\pi(a)$ the agent who receives item $a$ in allocation $\pi$.

We assume that the agents have valuation functions with {\em additive externalities}. More formally, every agent $i$ has a value $V_i(j,a)$ for every item $a \in \items$ and every agent $j \in N$; put simply, agent $i$ gets value $V_i(j,a)$ if item $a$ is allocated to agent $j$.
The value of agent $i$ from allocation $\pi$ is $V_i(\pi) = \sum_{a \in \items} V_i(\pi(a),a)$.

Let $a\in \items$ be an item. We define the {\em item-type} as a vector $\left(V_1(1,a),\ldots,V_1(|\agents|,a),V_2(1,a),\ldots,V_{|\agents|}(|\agents|,a)\right)$. 
That is, two items are of the same item-type if the associated vectors are the same; intuitively, these items are ``indistinguishable'' from the point of view of every agent. 
By $\types$ we denote the number of different item-types in an instance.

We will focus on {\em envy-freeness} and its relaxations, EF1 and EFX, in the presence of externalities. 
Since now every agent evaluates the {\em whole allocation} and not just their bundle like in the no-externalities case, the idea of {\em swapping bundles} needs to be deployed. 
We use $\piswap$ to denote the new allocation in which agents $i$ and $j$ swap their bundles in~$\pi$ while the bundles of the other agents remaining the same.

\begin{definition}[EF~\citep{velez2016fairness}]
An allocation $\pi$ is envy-free (EF), if for every pair of agents $i,j \in N$ it holds that $V_i(\pi) \geq V_i(\piswap)$.
\end{definition}

\begin{definition}[EF1~\citep{efx-externalities}]
An allocation $\pi$ is envy-free up to one item (EF1), if for every pair of agents $i,j \in N$ there exists an item $a \in \items$ and an allocation $\lambda$ such that:
(i) $\lambda_\ell = \pi_\ell \setminus \{a\}$, for all $\ell \in N$; and
(ii) $V_i(\lambda) \geq V_i(\lambdaswap)$.
\end{definition}

\begin{definition}[EFX~\citep{efx-externalities}]\label{def:EFX}
An allocation $\pi$ is envy-free up to any item (EFX), if for every pair of agents $i,j \in N$, if $V_i(\pi) < V_i(\piswap)$, then for any item $a \in \items$ and allocation $\lambda$ with the properties
\begin{enumerate}
    \item $\lambda_\ell = \pi_\ell \setminus \{a\}$, for all $\ell \in N$;
    \item $V_i(\lambda) - V_i(\lambdaswap) > V_i(\pi) - V_i(\piswap)$,
\end{enumerate}
it holds that $V_i(\lambda) \geq V_i(\lambdaswap)$.
\end{definition}

Observe that the second property of Definition~\ref{def:EFX} above implies that we {\em have to} remove an item from $\pi_i \cup \pi_j$ with {\em strictly} non-zero value for agent $i$; depending from which bundle we remove the item, it can be either a good, or a chore.
This definition is equivalent to EFX in the absence of externalities, when we have to remove items of non-zero value from a bundle an agent envies. A different, more constrained version, termed EFX$_{0}$, requires that envy should be eliminated by removing any item, even if the agent has zero value for it~\citep{PlautR2020}.

For allocation $\pi$ and two agents $i,j\in N$, we say that~$i$ \emph{envies} $j$ or alternatively that there is \emph{envy from $i$ towards~$j$} if $V_i(\piswap) > V_i(\pi)$. We will use the following simple observation in our hardness proofs.

\begin{observation}
    If %
    an agent $i$ does not envy agent $j$, then the pair $i,j$ satisfies Definition~\ref{def:EFX} for every item $a\in \items$.
\end{observation}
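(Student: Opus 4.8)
The plan is to unpack the definition of EFX directly and show that the non-envy hypothesis makes the conditional in Definition~\ref{def:EFX} vacuously satisfiable for every item. Recall that EFX imposes a requirement on a pair $i,j$ only in the case where the antecedent $V_i(\pi) < V_i(\piswap)$ holds; that is, only when $i$ envies $j$. So the first thing I would observe is that the statement ``$i$ does not envy $j$'' is, by the definition of envy given just above the observation, exactly the negation of this antecedent, i.e. $V_i(\piswap) \leq V_i(\pi)$, or equivalently $V_i(\pi) \geq V_i(\piswap)$.

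From here the argument is essentially a logical one. Fix an arbitrary item $a \in \items$ and let $\lambda$ be the associated allocation with $\lambda_\ell = \pi_\ell \setminus \{a\}$ for all $\ell$, as in property~1 of Definition~\ref{def:EFX}. The definition only forces a conclusion about this pair and item when the antecedent $V_i(\pi) < V_i(\piswap)$ is true. Since we are assuming $V_i(\pi) \geq V_i(\piswap)$, this antecedent is \emph{false}, so the implication that constitutes the EFX condition for the pair $i,j$ is vacuously true. Because $a$ was arbitrary, the pair $i,j$ satisfies Definition~\ref{def:EFX} for every item $a \in \items$, which is exactly the claim.

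The only subtlety worth spelling out—and the step I would be most careful about—is matching the quantifier structure of Definition~\ref{def:EFX} to the phrase ``the pair $i,j$ satisfies Definition~\ref{def:EFX} for every item $a$.'' The definition reads as ``if $V_i(\pi) < V_i(\piswap)$, then for any item $a$ (and $\lambda$ satisfying properties 1–2) we have $V_i(\lambda) \geq V_i(\lambdaswap)$.'' The guarding antecedent sits \emph{outside} the universal quantifier over $a$, so when it fails the entire statement holds regardless of $a$; there is genuinely nothing to verify about the values $V_i(\lambda)$ and $V_i(\lambdaswap)$. This is why no computation involving $\lambda$ or the discounted values is needed: the observation is purely a restatement that a false hypothesis discharges the EFX obligation. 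I do not anticipate any real obstacle here—the result is a direct consequence of the conditional form of the EFX definition—but making the vacuous-truth reasoning explicit is what turns the one-line intuition into a rigorous argument.
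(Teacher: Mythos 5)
Your proof is correct, but it takes a genuinely different route from the paper's. You treat the observation as a statement about the \emph{outer} conditional in Definition~\ref{def:EFX}: since ``$i$ does not envy $j$'' is exactly the negation of the antecedent $V_i(\pi) < V_i(\piswap)$, the EFX requirement for the pair $(i,j)$ is discharged vacuously, with no computation. The paper instead proves the \emph{inner}, per-item statement directly: fix any item $a \in \items$ and allocation $\lambda$ as in Property 1; if Property 2 fails there is nothing to check, and if Property 2 holds then $V_i(\lambda) - V_i(\lambdaswap) > V_i(\pi) - V_i(\piswap) \geq 0$, where the last inequality is precisely non-envy, so $V_i(\lambda) > V_i(\lambdaswap)$ and the required conclusion holds, in fact strictly. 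The two arguments buy slightly different things. Yours is more elementary and establishes that a non-envious pair can never witness an EFX violation, which is all the downstream hardness proofs need when they restrict attention to envious pairs. The paper's short chain of inequalities is stronger and more robust to how one parses the phrase ``satisfies Definition~\ref{def:EFX} for every item $a$'': it shows the conclusion $V_i(\lambda) \geq V_i(\lambdaswap)$ genuinely holds for every qualifying item and removal, not merely that the guarded implication is vacuous. Under the reading where the observation asserts that per-item conclusion (a reading suggested by the explicit quantification over items in its statement), vacuous truth alone would not establish it, whereas the paper's argument does; under your reading, both suffice.
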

\begin{proof}
    Let $a$ be an item in $\items$ and $\lambda$ be the allocation as is Property 1 of Definition~\ref{def:EFX}. If Property 2 is not satisfied, then we are done. Hence, by Property 2  $V_i(\lambda) - V_i(\lambdaswap) > V_i(\pi) - V_i(\piswap)$. Moreover, since $i$ does not envy $j$, we have  $V_i(\pi) \ge V_i(\piswap)$, so $V_i(\lambda) - V_i(\lambdaswap) > V_i(\pi) - V_i(\piswap) \ge 0$ and $V_i(\lambda) > V_i(\lambdaswap)$.
\end{proof}

Finally, we are ready to define the computational problems we will study.

\begin{definition}
\label{def:FDwE}
Let $\phi \in \{\text{EF, EF1, EFX}\}$. An instance $\ical=(N,\items,V)$ of $\FDwE{\phi}$ consists of a set of items $\items$ and a set of agents $\agents$ with valuation functions $V$ with additive externalities.
The task is to decide whether there exists an allocation that is fair with respect to solution concept~$\phi$.
\end{definition}

\paragraph{Normalized valuations.}%
Valuations with externalities allow us to consider only non-negative values in the valuations, i.e. we can ``normalize'' them as follows.
Let $i\in\agents$ be an agent. For each item $a \in \items$ we compute $x_{i,a} := \min_{j \in \agents} V_i(j,a)$ and we set $V_i(j,a) \leftarrow V_i(j,a) - x_{i,a}$.

\begin{proposition}\label{prop:normalized-valuations}
    Let $\ical$ be an instance of $\FDwE{\phi}$. Then, we can get an instance $\ical'$ with normalized valuations such that any solution for instance $\ical'$ corresponds to a solution for~$\ical$.
\end{proposition}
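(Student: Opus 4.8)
The plan is to show that the normalization does not change \emph{which} allocations are fair, so that the sets of $\phi$-fair allocations for $\ical$ and $\ical'$ coincide as sets of allocations over the same $\agents$ and $\items$; in particular every solution of $\ical'$ is a solution of $\ical$. First I would record the effect of the transformation on the aggregate valuations. Writing $V'$ for the normalized valuation, for a fixed agent $i$ and a fixed item $a$ the procedure subtracts the same constant $x_{i,a}=\min_{j\in\agents}V_i(j,a)$ from $V_i(j,a)$ for \emph{every} recipient $j$. Hence for any allocation $\pi$ on the full item set,
\[
V'_i(\pi)=\sum_{a\in\items}\bigl(V_i(\pi(a),a)-x_{i,a}\bigr)=V_i(\pi)-C_i,\qquad C_i:=\sum_{a\in\items}x_{i,a},
\]
where $C_i$ depends only on $i$ and not on $\pi$. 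The whole argument then reduces to observing that each fairness notion is invariant under subtracting such an agent-dependent constant.

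For EF this is immediate: $V'_i(\pi)\ge V'_i(\piswap)$ iff $V_i(\pi)-C_i\ge V_i(\piswap)-C_i$ iff $V_i(\pi)\ge V_i(\piswap)$, since $\pi$ and $\piswap$ allocate the same full item set and therefore carry the same shift $C_i$. The point I would stress is that the invariance holds \emph{precisely because} both allocations being compared range over the same set of items, so the additive constant cancels.

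The only real subtlety arises for EF1 and EFX, where the comparison is made between $\lambda$ and $\lambdaswap$, which are allocations over the reduced item set $\items\setminus\{a\}$ rather than over all of $\items$. Here the relevant shift is $C_i-x_{i,a}$ instead of $C_i$, but the key observation is that $\lambda$ and $\lambdaswap$ omit the \emph{same} item $a$, so they still carry \emph{identical} shifts; consequently $V'_i(\lambda)-V'_i(\lambdaswap)=V_i(\lambda)-V_i(\lambdaswap)$. Combined with the full-item identity above, every quantity appearing in Definition~\ref{def:EFX} — the hypothesis $V_i(\pi)<V_i(\piswap)$, the strict inequality of Property~2, and the conclusion $V_i(\lambda)\ge V_i(\lambdaswap)$ — is preserved verbatim under normalization; the check for EF1 is even simpler, since only the single inequality $V_i(\lambda)\ge V_i(\lambdaswap)$ is involved. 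I would therefore conclude that an allocation is $\phi$-fair for $\ical'$ if and only if it is $\phi$-fair for $\ical$, which yields the claimed correspondence. The main thing to get right — and essentially the only place an error could hide — is the bookkeeping of exactly which constant is subtracted when an item is removed, so I would make explicit that deleting $a$ changes the shift from $C_i$ to $C_i-x_{i,a}$ but does so symmetrically on both sides of each inequality.
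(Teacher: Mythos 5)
Your proposal is correct and follows essentially the same route as the paper's proof: subtract the agent-dependent constant $C_i=\sum_{a\in\items}x_{i,a}$, observe that $V'_i(\pi)-V'_i(\piswap)=V_i(\pi)-V_i(\piswap)$ since both allocations cover the same item set, and then note that for EF1/EFX the allocations $\lambda$ and $\lambdaswap$ omit the same item, so the (reduced) shifts still cancel and all inequalities in the definitions are preserved verbatim. Your explicit bookkeeping of the shift $C_i-x_{i,a}$ after removal is exactly the point the paper also makes, just stated slightly more carefully.
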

\begin{proof}
    Let $\ical'$ be the normalized instance obtained as above, i.e., for each item $a \in \items$ and each agent $i\in \agents$ we compute $x_{ia} = \min_{j \in \agents} V_i(j,a)$ and we set $V'_i(j,a) = V_i(j,a) - x_{ia}$.

    Let $\pi$ be any allocation. Then 
    \begin{align*}
     V'_i(\pi) &= \sum_{a \in \items} V'_i(\pi(a),a) 
     = \sum_{a \in \items} (V_i(\pi(a),a) - x_{ia})\\ 
     &= V_i(\pi) - \sum_{a \in \items}x_{ia}.
    \end{align*}
    It follows that for any $j\in \agents$ we have 
    \begin{align*}
        V'_i(\pi) &-V'_i(\piswap)  \\ & = (V_i(\pi)  - \sum_{a \in \items}x_{ia}) - (V_i(\piswap) - \sum_{a \in \items}x_{ia}) \\ &= V_i(\pi) - V_i(\piswap).
    \end{align*}

    Therefore, it is easy to observe that $\pi$ is EF for $\ical$ if and only if it is EF for $\ical'$.

    Similarly, for any item $a'$, if $\lambda$ is obtained from $\pi$ such that $\lambda_i = \pi_i\setminus\{a'\}$ for all $i\in \agents$, then we get $V'_i(\lambda) = V_i(\pi) - \sum_{a \in \items\setminus\{a'\}}x_{ia}$ and $ V'_i(\lambda) -V'_i(\lambdaswap) = V_i(\lambda) - V_i(\lambdaswap)$. Therefore, it follows rather straightforwardly that $\pi$ is $\phi$ for $\phi \in \{\text{EF1, EFX}\}$ for $\ical$ if and only if it is $\phi$ for $\ical'$.
\end{proof}

\paragraph{Chores.}
While in the standard setting the definition of chores is straightforward, in the presence of externalities they can be defined in more than one way; \citet{efx-externalities} defined them informally. Below, we define {\em strong-chores} and {\em weak-chores}. Intuitively, a strong-chore is an item that an agent does not want to have at all; this resembles the ``standard'' chore-definition. On the other hand, an item is a weak-chore, if the agent does not mind having it, but there exist some other agents that it would be better for him if they get it; so, weak-chores capture positive externalities. A strong-chore is a weak-chore, but not vice versa. Hence, negative results with respect to weak-chores carry over to strong-chores.

\begin{definition}[Strong/weak-chore]\label{def:strong-weak-chores}
    An item $a$ is a {\em strong-chore} for agent $i$ if $V_i(i,a) \leq V_i(j,a)$ {\em for all} $j \neq i$, where for at least one $j$ the inequality is strict. 
    An item $a$ is a {\em weak-chore} for agent $i$ if there exists an agent $j$ such that $V_i(i,a) < V_i(j,a)$.
\end{definition}

\paragraph{Parameterized Complexity.} 
An instance of a parameterized problem $Q\subseteq \Sigma\times\mathbb{N}$, where $\Sigma$ is fixed and finite alphabet, is a pair $(I,k)$, where $I$ is an input of the problem and~$k$ is \emph{parameter}. The ultimate goal of parameterized algorithmics is to confine the exponential explosion in the running time of an algorithm for some \NPh problem to the parameter and not to the instance size. In this line of research, the best possible outcome is the so-called \emph{fixed-parameter} algorithm with running time $f(k)\cdot |I|^\Oh{1}$ for any computable function $f$. 
That is, for every fixed value of the parameter, we have a polynomial time algorithm where, moreover, the degree of the polynomial is independent of the parameter. 
Based on this, we define the complexity class~\FPT, which contains all \emph{fixed-parameter tractable} parameterized problems -- problems that admit a fixed-parameter algorithm.

Naturally, not all parameterized problems admit a fixed-parameter algorithm. The strongest intractability result is to show that some parameterized problem is \NPh even for a constant value of the parameter -- we call such parameterized problems para-\NPh. To exemplify this, assume parameterized problem \textsc{Vertex Coloring} with the number of colors as the parameter. It is well-known that this problem is \NPh already for $k=3$. Hence, \textsc{Vertex Coloring} parameterized by the number of colors is clearly not in \FPT (or \XP), as the existence of a fixed-parameter tractable algorithm would yield $\mathsf{P}=\NP$, which is unlikely.

For a more comprehensive introduction to parameterized complexity, we refer the interested reader to the monograph of~\citet{CyganFKLMPPS2015}.

\section{General Valuations}
\label{sec:general}
In this section, we focus on the case of general valuations for the agents. Our first negative result shows that it is intractable to decide whether there is an EFX allocation even when there are three agents and there are no chores.

\begin{theorem}\label{thm:NP-h:three-agents}
    \FDwE{EFX} is \NP-complete, even if there are three agents and there are no weak-chores.
\end{theorem}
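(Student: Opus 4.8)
The plan is to prove membership in \NP{} directly and then establish \NP-hardness by a reduction from \textsc{Partition}. For membership, observe that an allocation $\pi$ is a polynomial-size certificate and that checking EFX is a polynomial-time task: for each of the (here six) ordered pairs $(i,j)$ we compute the envy $V_i(\piswap)-V_i(\pi)$ as a sum over the $m$ items, and whenever it is positive we verify \Cref{def:EFX} by iterating over the items $a\in\pi_i\cup\pi_j$ (removing an item outside these two bundles cannot change the swap-difference, so Property~2 fails vacuously for it). Hence \FDwE{EFX} is in \NP.

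For hardness I would reduce from \textsc{Partition}: given positive integers $s_1,\dots,s_n$ with $\sum_k s_k = 2T$, decide whether some subset sums to exactly $T$. The reduction builds a three-agent instance $\{1,2,3\}$ with no weak-chores, in which agents~$1$ and~$2$ act as symmetric \emph{dividers}. I introduce one \emph{number item} $g_k$ per integer with $V_1(1,g_k)=V_2(2,g_k)=M\cdot s_k$ for a large scaling factor $M$ and all externality values equal to~$0$; this makes each $g_k$ a good, so no weak-chore is created, and intuitively a fair allocation must split the $g_k$'s into two bundles of equal total value, i.e.\ a balanced partition.

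Two things make this delicate, and the second is the main obstacle. First, since an EFX allocation always exists for \emph{two} agents (\citet{efx-externalities}), the third agent is indispensable: agent~$3$ together with a small family of auxiliary \emph{gadget items} must couple the three pairwise EFX conditions so that, on the one hand, agent~$3$ is never envied in an EFX-violating way and never envies—which by the observation that a non-envious pair satisfies \Cref{def:EFX} discharges the pairs involving~$3$—while, on the other hand, it is impossible to dump number items on agent~$3$ without creating envy, forcing every $g_k$ onto a divider. Second, plain EFX only forbids envy exceeding the \emph{smallest} positively-valued item of the envied bundle, so it tolerates a nonzero imbalance and does not by itself encode the exact equality required by \textsc{Partition}. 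To kill this slack I would plant in each divider's bundle a \emph{precision item} of value $1$ (tiny relative to $M$) that is desired by the opposite divider; then, if agent~$1$ envies agent~$2$, removing the precision item from $\pi_2$ strictly reduces the envy by~$1$, so EFX forces the envy to be at most~$1$, and since the number items contribute a multiple of~$M$ to the envy, the imbalance must be exactly~$0$. The engineering challenge—and the reason agent~$3$ is needed to defeat the two-agent flexibility—is to force each precision item into the correct divider's bundle (a two-agent allocation could otherwise hide it in the envier's own bundle and avoid exact balance), while keeping the whole construction free of weak-chores.

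With the gadget in place both directions are routine. For completeness, a balanced partition $S_1\cup S_2$ yields the allocation that gives agent~$1$ the $g_k$'s of $S_1$ together with its precision item, agent~$2$ those of $S_2$ with its precision item, and agent~$3$ the remaining gadget items; one checks this is EFX pair by pair. For soundness, any EFX allocation has, by the coupling above, no number item on agent~$3$ and zero imbalance between agents~$1$ and~$2$, so reading off the two number-item bundles gives a subset of value exactly~$T$. Therefore \FDwE{EFX} is \NP-complete already for three agents with no weak-chores.
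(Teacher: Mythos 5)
Your membership argument is fine, and your overall architecture is genuinely close to the paper's: the paper also reduces from a \textsc{Partition} variant with agents $1$ and $2$ as symmetric dividers, also parks two auxiliary items with agent $3$, and also uses exactly your ``precision'' trick (agent $1$ values the auxiliary item $a_{2n+2}$ at $1$ when held by herself, so her envy towards agent $3$ is exactly $1$ and is eliminated by removing either auxiliary item, while any larger envy would violate EFX). The problem is that you have not actually constructed the gadget: you explicitly defer ``the engineering challenge'' of forcing the precision items into the correct bundles and forbidding number items from being dumped on agent $3$, and that challenge \emph{is} the theorem. Without concrete values for agent $3$'s valuations and for the cross-externalities, the soundness direction (``any EFX allocation has no number item on agent $3$ and zero imbalance'') is an unsupported assertion, and the claim that ``both directions are routine'' is wrong --- in the paper, soundness is a nine-case analysis over where the two auxiliary items land, each bad case killed by exhibiting a specific removal that strictly decreases but does not eliminate envy.

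Two devices you are missing make the forcing possible, and the first directly addresses your stated obstacle. First, you may use \emph{negative} externality values without creating weak-chores: \Cref{def:strong-weak-chores} says $a$ is a weak-chore for $i$ only if $V_i(i,a) < V_i(j,a)$ for some $j$, i.e., $i$ prefers someone else to get it. The paper sets $V_1(2, a_{2n+2}) = V_2(1, a_{2n+1}) = -M^2$; since $V_1(1,a_{2n+2}) = 1 > -M^2$, no weak-chore arises, yet any allocation placing an auxiliary item with the wrong divider drives that divider's rival to enormous negative utility and produces an unavoidable EFX violation. Your construction, with all externalities of number items set to $0$ and only small positive gadget values, lacks any comparably strong penalty, and it is unclear it can be completed. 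Second, the paper reduces from \textsc{Equal-Cardinality-Partition} after shifting every integer by $M = (s_{\max}-s_{\min})\cdot n^2$, so every number lies in $[M, M + s_{\max}-s_{\min}]$; this lets cardinality bounds (e.g., $|I_2| \le n+1$) stand in for exact sum comparisons throughout the case analysis. Plain \textsc{Partition} with values $M\cdot s_k$ gives you no control over how many items a bundle of a given total value contains, which makes several of the bad cases (partial dumps on agent $3$, unbalanced splits between the dividers) much harder to exclude.
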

\begin{proof}
    Firstly, it is not hard to see that we can verify in polynomial time whether an allocation is EFX, since for any agent we can simply calculate whether they envy an allocation where we swap two bundles and whether this can be eliminated by removing each item that satisfies the properties form Definition~\ref{def:EFX}.

    Next, we prove hardness by providing a polynomial reduction from the \textsc{Equal-Cardinality-Partition} problem in which, given a sequence $S=(s_1,s_2,\ldots,s_{2n})$ of $2n$ integers for some $n\in \mathbb{N}$, the goal is to find 
    a subset $I\subseteq [2n]$ of size $n$ such that $\sum_{i\in I}s_i = \sum_{i\in [2n]\setminus I} s_i$. 
    This version of NP-hard \textsc{Partition} problem~\citep{GareyJohnson79} is easily seen to be also NP-hard, as we can just add $|S|$ many copies of integer $0$ to achieve balanced sizes of the subsets. 
    Observe that the problem remains NP-hard even if we assume that $n>10$, since for $n\le 10$, we can just try all possibilities in polynomial time. 
    Let $s_{\min}$ and $s_{\max}$ be the minimum and the maximum integers in $S$ respectively, and let $M=(s_{\max}-s_{\min})\cdot n^2$. We can assume that $M> 0$, otherwise all the numbers in $S$ are equal and hence any subset of size $n$ forms a solution. 
    
    We construct a new instance $S'=(s'_1, s'_2,\ldots, s'_{2n})$ of \textsc{Equal-Cardinality-Partition} by shifting all the numbers in $S$ by a constant: $s'_i = M+ s_i - s_{\min}$. Then for any $I\subseteq [2n]$ of size $n$ it holds that $\sum_{i\in I}s'_i = \sum_{i\in [2n]\setminus I}s'_i$ if and only if $\sum_{i\in I}s_i = \sum_{i\in [2n]\setminus I}s_i$. 
     Let us now denote $B=\frac{1}{2}\sum_{i\in [2n]}(s_i-s_{\min})$, then  \textsc{Equal-Cardinality-Partition} asks to find $I\subseteq [2n]$ of size $n$ such that $$\sum_{i\in I}s'_i=\frac{1}{2}\sum_{i\in [2n]}s'_i=\frac{1}{2}\sum_{i\in [2n]}(M+ s_i - s_{\min})=Mn+B.$$ In addition, note that 
     \begin{align*}
       B=&\frac{1}{2}\sum_{i\in [2n]}(s_i-s_{\min})
          \le \frac{1}{2}\sum_{i\in [2n]}(s_{\max}-s_{\min})= \\
          &= n\cdot (s_{\max}-s_{\min}) < n^2\cdot (s_{\max}-s_{\min}) = M,
     \end{align*}
    so $B < M$. By construction, $M \le s'_i \le M+s_{\max} - s_{\min}$ for every $i\in [2n]$. Therefore, for any set $I\subseteq [2n]$ we have $M\cdot |I| \le \sum_{i\in I}s'_i \le |I|\cdot (M + s_{\max}-s_{\min}) < M\cdot (|I|+1)$.
    
     We now create an equivalent instance of \FDwE{EFX} with three agents and no chores, where the set $\items=\{a_i\:|\:i\in [2n+2]\}$ consists of $2n+2$ items, first $2n$ associated with integers in $S$ and two auxiliary items. The valuations are defined as follows:
     \begin{itemize}
         \item $V_i(i, a_j) = s'_j$ for $i\in [3]$ and $j\in [2n]$;
         \item $V_1(1, a_{2n+1})=V_2(2, a_{2n+2})=Mn+B$;
         \item $V_1(1, a_{2n+2})=V_2(2, a_{2n+1})=1$;
         \item $V_1(2, a_{2n+2})=V_2(1, a_{2n+1})=-M^2$;
         \item $V_3(3, a_{2n+1})=V_3(3, a_{2n+2})=\frac{Mn+B}{2}$;
         \item all remaining values are zeros.
     \end{itemize}
    
    We start by showing how to obtain an EFX allocation $\pi$ from any $I\subseteq [2n]$ such that $\sum_{i\in I}s'_i = \sum_{i\in [2n]\setminus I}s'_i$. We set $\pi = (\{a_i\mid i\in I\}, \{a_i\mid i\in [2n]\setminus I\}, \{a_{2n+1}, a_{2n+2}\})$. 
    
    First, observe that $V_1(\pi) = V_2(\pi) = V_1(\pi^{1\leftrightarrow 2}) = V_2(\pi^{1\leftrightarrow 2}) = V_3(\pi) = V_3(\pi^{1\leftrightarrow 3}) = V_3(\pi^{2\leftrightarrow 3}) = Mn+B$ and $ V_1(\pi^{1\leftrightarrow 3}) = V_2(\pi^{2\leftrightarrow 3}) = Mn+B+1$. Therefore, the only envy is from the agents 1 and 2 towards the agent 3. Since there are no chores, removing items from $\pi_1$ or $\pi_2$ does not decrease the envy. Moreover, removing $a_{2n+1}$ from $\pi_3$ eliminates the envy. Indeed, if $\lambda$ is such that $\lambda_{\ell} = \pi_{\ell}\setminus a_{2n+1}$, then $V_1(\lambda) = V_2(\lambda) = Mn+B$, while  $V_1(\lambda^{1\leftrightarrow 3}) = 1$ and  $V_2(\lambda^{2\leftrightarrow 3}) = Mn+B$.  Similarly, removing $a_{2n+2}$ from $\pi_3$ eliminates the envy. Hence, $\pi$ is EFX.
    
    On the other hand, assume that $\pi=(\pi_1, \pi_2, \pi_3)$ is EFX. For every $j\in[3]$, let $I_j = \{i\mid a_{i}\in \pi_j\}\cap [2n]$. We will show that $I_1$ is a solution to \textsc{Equal-Cardinality-Partition} instance $S$. 
    For this, we will distinguish all possibilities depending on where $a_{2n+1}$ and $a_{2n+2}$ belong and show that the only possible case is $\{a_{2n+1},a_{2n+2}\}\subseteq \pi_3$, since in rest of the cases, the assignment $\pi$ cannot be EFX.

    According to Definition \ref{def:EFX}, the fact that $\pi$ is not EFX can be witnessed by a pair of agents $(i,j)$ and an item $a\in \pi_i \cup \pi_j$ such that $V_i(\piswap) - V_i(\pi) > V_i(\lambdaswap) - V_i(\lambda) >0$, where $\lambda$ is the allocation obtained from $\pi$ by removing $a$. In our examples, we will always remove $a$ from the bundle of $j$. Then, since for any item $a$ it holds that $V_i(i,a)>V_i(j,a)$ whenever $i\ne j$, the first inequality $V_i(\piswap) - V_i(\pi) > V_i(\lambdaswap) - V_i(\lambda)$ will be satisfied automaticaly. Therefore, it will suffice to ensure that $V_i(\lambdaswap) - V_i(\lambda) >0$.
    
     To simplify the case distinction, we observe the following:
    \begin{observation}
    \label{obs: i1_empty}
    If $a_{2n+1} \in \pi_1$, then $I_1 = \emptyset$.
    \end{observation}
    \begin{proof}
    Assume that $I_1\neq \emptyset$, and let $\lambda$ be the 
    assignment obtained from $\pi$ by removing $a_{i_1}$  from $\pi_1$ for some $i_1\in I_1$. 
    Then we have $V_2(\lambda) \le -M^2 + Mn+B +\sum_{i\in I_2}s'_i\le -M^2 +3Mn+3B = -Mn^2(s_{\max}-s_{min}) + (3n+3)M \le$ \\ $\le(3n+3-n^2)M< 0$ for any $n>10$. But this is strictly smaller than $V_2(\lambda^{1\leftrightarrow 2}) \ge 1+ \sum_{i\in I_1\setminus\{i_1\}}s'_i > 0$, which contradicts to the fact that $\pi$ is EFX.
    \end{proof}
    Furthermore, note that swapping the agents 1 and 2 and at the same time items $a_{2n+1}$ and $a_{2n+2}$ yields the same instance. This symmetry allows to easily handle some cases. 
    
    \textbf{Case 1. $\{a_{2n+1},a_{2n+2}\}\subseteq \pi_3$}. Let $\lambda$ be obtained by removing $a_{2n+2}$ from $\pi_3$, then $V_1(\lambda^{1\leftrightarrow 3}) \ge Mn+B$. As $\pi$ is EFX, we have $\sum_{i\in I_1}s'_i=V_1(\pi)=V_1(\lambda) \ge Mn+B$. Similarly, $\sum_{i\in I_2}s'_i\ge Mn+B$. The only way to achieve this by splitting items $a_i$, $i\in [2n]$, between the agents is to split them such that $\sum_{i\in I_1}s'_i = \sum_{i\in I_2}s'_i = Mn+B$. In particular, $I_1$ is a solution to $S$.
    
    \textbf{Case 2. $a_{2n+1}\in \pi_1$, $a_{2n+2}\in \pi_3$.} 
    By Observation \ref{obs: i1_empty} we have $I_1=\emptyset$, so $V_1(\pi) = Mn+B$ and $V_1(\pi^{1\leftrightarrow 2}) = \sum_{i\in I_2}s'_i$. If $|I_2|\ge n+2$, then for $\lambda$ obtained from $\pi$ by removing $a_{i_2}$ for some $i_2 \in I_2$, we have $V_1(\lambda^{1\leftrightarrow 2}) = \sum_{i\in I_2\setminus \{i_2\}}s'_i\ge |I_2\setminus\{i_2\}|\cdot M\ge (n+1)\cdot M > Mn+B = V_1(\lambda)$, which is impossible for an EFX allocation. Hence $|I_2|\le n+1$, which along with $|I_1|+|I_2|+|I_3|=2n$ yields that $|I_3|\ge n-1$. 
    
    Let $\lambda$ be obtained from $\pi$ by removing arbitrary $a_3 \in \pi_3$, then $V_2(\lambda) = -M^2+ \sum_{i\in I_2}s'_{i} < -M^2 +M(|I_2|+1) \le -M^2+ M(n+2)$. On the other hand, since $a_{n+2}\in \pi_3$, we have $V_2(\lambda^{2\leftrightarrow 3})=-M^2+Mn+B+\sum_{i\in I_3\setminus\{i_3\}}s'_i \ge -M^2+Mn+B + (n-2)M > -M^2+ M(2n-2)$. Since $n>10$, we have $V_2(\lambda^{2\leftrightarrow 3}) > V_2(\lambda)$, which contradicts to $\pi$ being EFX. 
    
    \textbf{Case 3. $a_{2n+2}\in \pi_2$, $a_{2n+1}\in \pi_3$}: analogous to \textbf{Case 2}.
    
    \textbf{Case 4. $a_{2n+2}\in \pi_1$, $a_{2n+1}\in \pi_3$.} If $I_3\neq \emptyset$, let $\lambda$ be the assignment obtained from $\pi$ by removing $a_{i_3}$ for some $i_3\in I_3$. Then $V_1(\lambda) = 1 + \sum_{i\in I_1}s'_i \le (|I_1|+1)M$ and $V_1(\lambda^{1\leftrightarrow 3}) = Mn+B + \sum_{i\in I_3\setminus\{i_3\}}s'_i > Mn$. Since $\pi$ is EFX, we have $V_1(\lambda)\ge V_1(\lambda^{1\leftrightarrow 3})$ and hence $|I_1|\ge n$. In particular, $|I_2|\le n-1$.
    
    Let $\lambda$ be obtained from $\pi$ by removing $a_{i_1}$ for some $i_1\in I_1$, then $V_2(\lambda) = \sum_{i\in I_2}s'_i \le (|I_2|+1)M\le Mn$ and $V_2(\lambda^{1\leftrightarrow 2}) = Mn+B+\sum_{i\in I_1\setminus \{i_1\}}s_i'\ge Mn+B$, so $V_2(\lambda^{1\leftrightarrow 2}) > V_2(\lambda)$, which contradicts to $\pi$ being EFX. 
    
    Hence, $I_3=\emptyset$. Then either $|I_2|\ge n$ or $|I_1|\ge n$. Let's first consider the case when $|I_2|\ge n$. For any $i_2\in I_2$, let $\lambda$ be the assignment obtained from $\pi$ by removing $a_{i_2}$. We get $V_3(\lambda) = \frac{Mn+B}{2} < M\frac{n+1}{2}$ and $V_3(\lambda^{2\leftrightarrow 3}) = \sum_{i\in I_2\setminus\{i_2\}}s_i' \ge M(n-1)$ so $V_3(\lambda^{2\leftrightarrow 3}) > V_3(\lambda)$ and the assignment is not EFX. The case $|I_1|\ge n$ is even more straightforward, as for any $i_1\in I_1$ and $\lambda$ obtained by removing $a_{i_1}$ from $\pi$, we have $V_3(\lambda) = \frac{Mn+B}{2}$ and $V_3(\lambda^{1\leftrightarrow 3}) = \frac{Mn+B}{2} + \sum_{i\in I_1\setminus\{i_1\}}s_i'$ so $V_3(\lambda^{1\leftrightarrow 3}) > V_3(\lambda)$. Therefore, it is not possible to achieve this case.

    \textbf{Case 5. $a_{2n+1}\in \pi_2$, $a_{2n+2}\in \pi_3$}: analogous to \textbf{Case 4}.
    
    \textbf{Case 6. $a_{2n+1}\in \pi_1$, $a_{2n+2}\in \pi_2$.} 
    By Observation \ref{obs: i1_empty} we have $I_1=\emptyset$. Symmetrically, $I_2=\emptyset$ and hence $I_3=[2n]$. Then for any item $i_3\in I_3$, removing $a_{i_{3}}$ decreases the envy from agent 1 towards agent 3. Moreover, if $\lambda$ is obtained from $\pi$ by removing $a_{i_3}$, then $V_1(\lambda) = -M^2+Mn+B$ but $V_1(\lambda^{1\leftrightarrow 3}) = - M^2 + \sum_{i\in I_3\setminus\{i_3\}}s'_i \ge -M^2 + M(2n-1) >  -M^2 +  M(2n-2) + B > -M^2+Mn+B = V_1(\lambda)$, so this case is also not possible. 
    
    \textbf{Case 7. $a_{2n+2}\in \pi_1$, $a_{2n+1}\in \pi_2$.} First, assume that $|I_1|+2 \le |I_3|$, then for $\lambda$ obtained from $\pi$ by removing $a_{i_3}$ for some $i_3\in I_{3}$ we have $V_1(\lambda) = 1 + \sum_{i\in I_1}s'_i\le 1+ |I_1|\cdot (M + s_{\max}-s_{\min}) < (|I_1|+1)M$ and $V_1(\lambda^{1\leftrightarrow 3}) = \sum_{i\in I_3\setminus \{i_3\}}s'_i\ge (|I_3|-1)M\ge (|I_1|+1)M$, so $V_1(\lambda^{1\leftrightarrow 3}) > V_1(\lambda)$. It follows that $|I_1|\ge |I_3|-1$. Using a similar argument, we get that $|I_2|\ge |I_3|-1$. 
    
    If now $I_1\neq \emptyset$, then by removing any $a_{i_1}$ for $i_1\in I_1$ from $\pi$ to obtain $\lambda$, we get $V_3(\lambda) = \sum_{i\in I_3}s'_i < M(|I_3|+1)$ and $V_3(\lambda^{1\leftrightarrow 3})= \frac{Mn+B}{2} + \sum_{i\in I_1\setminus \{i_1\}} s'_i\ge M(\frac{n}{2}+|I_1|-1)$. Since $|I_1|\ge |I_3|-1$, we get $V_3(\lambda^{1\leftrightarrow 3}) > V_3(\lambda)$. Hence $I_1=\emptyset$. An analogous argument implies $I_2=\emptyset$, and so $I_3 = [2n]$. But this contradicts to $|I_1|\ge |I_3|-1$.
    
    \textbf{Case 8. $a_{2n+1}, a_{2n+2}\in \pi_1$.} Let $\lambda$ be the assignment we obtain from $\pi$ after removing $a_{2n+2}$ from $\pi_1$. Then $V_2(\lambda) = -M^2 + \sum_{i\in I_2}s'_i\le -M^2 + (2n+1)M < 0$ and $V_2(\lambda^{1\leftrightarrow 2}) = 1+ \sum_{i\in I_1}s'_i > 0$, so $V_2(\lambda^{1\leftrightarrow 2}) > V_2(\lambda)$. Therefore $\pi$ is not EFX and this case is not possible.
    
    \textbf{Case 9. $a_{2n+1}, a_{2n+2}\in \pi_2$}: analogous to \textbf{Case 8}.
    
    This finishes all possible cases. We saw that the only possibility for $\pi$ to be EFX allocation is when $\{a_{2n+1},a_{2n+2}\}\subseteq \pi_3$ in which case $I_1$ is a solution to \textsc{Equal-Cardinality-Partition} instance~$S$. 
\end{proof}

Next we restrict the problem at a different dimension and we constrain the valuation function. As our next theorem shows, the problem remains hard even if we severely limit the different values in the valuation functions of the agents and the number of different item types.

\begin{theorem}\label{thm:EFX:NPh:d}
    \FDwE{EFX} is \NP-complete even if the valuation function uses only $6$ different values, $3$ item types, and there are no weak-chores.
\end{theorem}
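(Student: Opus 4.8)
The plan is to prove \NP-hardness by reduction from a strongly \NPh problem, while \NP-membership follows exactly as in Theorem~\ref{thm:NP-h:three-agents} (we check the polynomially many ordered pairs and the single-item removals). The decisive structural observation is that, unlike in Theorem~\ref{thm:NP-h:three-agents}, here neither the magnitude of the values nor the number of item types may carry the instance: with only three item types, any two items of the same type are valued identically by every agent in every position, so an allocation is fully described by the count-vectors $(n_i^1,n_i^2,n_i^3)$ recording how many items of each type each agent receives, subject to fixed per-type supplies $m_1,m_2,m_3$. Consequently the entire combinatorial content of the reduction has to be stored in the \emph{agents}: their number is unbounded, and for each ordered pair and each type we may freely prescribe one of a constant number of externality differences. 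This is precisely what lets us keep six values and three types while still encoding a hard instance.

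Before designing the gadget I would distill a convenient characterization of EFX in the no-weak-chore regime. After normalizing (Proposition~\ref{prop:normalized-valuations}) and writing $\delta_{ij}(a) = V_i(i,a) - V_i(j,a)$, the absence of weak-chores (Definition~\ref{def:strong-weak-chores}) gives $\delta_{ij}(a) \ge 0$ for all $i,j,a$, so the envy of $i$ towards $j$ equals $\sum_{a \in \pi_j}\delta_{ij}(a) - \sum_{a \in \pi_i}\delta_{ij}(a)$, removing an item of $\pi_i$ can only increase it, and the envy-reducing removals of Definition~\ref{def:EFX} are exactly the items $a \in \pi_j$ with $\delta_{ij}(a) > 0$. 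Hence $\pi$ is EFX if and only if, for every ordered pair $(i,j)$ with some $a\in\pi_j$ satisfying $\delta_{ij}(a)>0$,
\[
 \sum_{a \in \pi_j}\delta_{ij}(a) - \sum_{a \in \pi_i}\delta_{ij}(a) \;\le\; \min_{a \in \pi_j:\,\delta_{ij}(a)>0}\delta_{ij}(a).
\]
Since $\delta_{ij}$ depends only on $i$, $j$ and the type of $a$, this is a family of constraints on the count-vectors: a linear part (a system of difference constraints, polynomially solvable on its own) together with the $\min$-term, whose dependence on which positive-difference types $j$ actually holds is the genuine source of combinatorial hardness. My reduction would be engineered against this inequality.

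For the reduction itself I would use many agents to host a strongly \NPh instance --- e.g.\ \textsc{Independent Set}/\textsc{Clique} or \textsc{3-Dimensional Matching} --- recording the underlying (hyper)graph in the \emph{pattern} of which ordered pairs $(i,j)$ receive which of the few admissible values of $\delta_{ij}^t$, which needs no more than six distinct numbers. Concretely I would reserve one item type as the ``selection'' tokens whose placement encodes the candidate solution, and the other two types as auxiliary enforcement items. A small number of deliberately empty ``checker'' agents then turns the inequality above into capacity constraints on the selected agents (an empty checker $i$ with $\delta_{ij}^1=c>0$ forces $j$ to hold at most one positive-difference item), while difference constraints across non-empty pairs enforce the complementary lower bounds, so that a globally valid selection exists iff an EFX allocation exists. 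The forward direction builds the allocation from a solution and verifies the displayed inequality type by type; the backward direction mirrors the exhaustive case analysis of Theorem~\ref{thm:NP-h:three-agents}, ruling out every placement of the auxiliary items that is not of the intended form.

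The main obstacle is exactly the interplay forced by the tight budget. First, I must make \emph{EFX} --- not merely the weaker EF1 --- the operative constraint: a bad selection must leave residual envy even after the \emph{least} envy-reducing admissible item is deleted, whereas EF1 would let us delete the most offending one. The characterization pins this on the $\min$-term, so the gadgets must guarantee that an envied bundle always also contains an item of small positive difference, making the worst-case removal too weak to cancel the envy. Second, additive swap-based envy only ever ``sees'' the two swapped agents' bundles --- in particular two agents holding identical items envy each other by zero --- so the disjunctive, all-or-nothing behaviour of a selection cannot be read off a single pair directly and must instead be produced through the scarce auxiliary types and the checker agents, all while respecting $\delta_{ij}(a)\ge 0$, three item types, and six values. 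Threading every corner case of Definition~\ref{def:EFX} through this narrow construction is where essentially all of the work lies.
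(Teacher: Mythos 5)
Your preliminary analysis is sound: the normalization, the identity $V_i(\piswap)-V_i(\pi)=\sum_{a\in\pi_j}\delta_{ij}(a)-\sum_{a\in\pi_i}\delta_{ij}(a)$, and the resulting characterization of EFX under no weak-chores (envy from $i$ towards $j$ must not exceed the \emph{minimum} positive difference $\delta_{ij}(a)$ present in $\pi_j$) are all correct, and this characterization is indeed the engine that drives the paper's case analysis implicitly. You also correctly diagnose that with only three item types the hardness must be encoded in the pattern of pairwise differences across many agents, which is exactly what the paper does. However, what you have written is a plan for a reduction, not a reduction: the source problem, the per-type item counts, the six concrete values, and the actual pattern of differences $\delta_{ij}^t$ are never specified, and neither direction of correctness is carried out. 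You concede this yourself (``threading every corner case \ldots is where essentially all of the work lies''), and that omitted part \emph{is} the proof. For comparison, the paper reduces from \textsc{Min Bisection} on cubic graphs: the agents are the $3n$ edges, the three item types have sizes $\frac{3n+k}{2}$, $\frac{3n-k}{2}$, and $3n-k$, the six values are $0$, $1$, $5n^2-4n$, $5n^2$, $10n^2-4n$, $10n^2$, and the difference pattern is governed by whether two edges share an endpoint; the backward direction is driven by counting, since $|X\cup Y|=|N|$ forces every agent to hold exactly one item of those types and $|Z|=|N|-k$ forces exactly $k$ agents to miss a $Z$-item.

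Moreover, the one concrete mechanism you do propose --- ``deliberately empty checker agents'' whose emptiness converts the EFX inequality into a capacity constraint on other agents --- does not survive the backward direction as stated. In an arbitrary EFX allocation nothing forces your checkers to be empty: the allocation may hand them items, and then the capacity argument evaporates; you would need to prove that in every EFX allocation the checkers are empty (or handle every placement of items on them), which is precisely the kind of structural lemma you have not supplied. The paper sidesteps this entirely by having no distinguished agents at all, letting pigeonhole on the item counts impose the intended structure on every agent uniformly. Until the checker mechanism is repaired (or replaced by such a counting argument) and the gadget is instantiated with explicit valuations that verifiably respect the six-value, three-type, no-weak-chore budget, the proposal remains an outline rather than a proof.
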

\begin{proof}
    We prove the statement by providing a polynomial reduction from \textsc{Min Bisection} problem on cubic graphs~\citep{bui1987graph}, i.e., the graphs where every vertex has degree precisely three. In \textsc{Min Bisection} we are given a graph $G=(V,E)$ on $2n$ vertices and an integer $k$ and the question is whether there exists a partition $(X,Y)$ of $V$ such that $|X|=|Y|=n$ and there are at most $k$ edges in $E$ with one endpoint in $X$ and the other endpoint in $Y$. 
    We can assume that no partition of $V$ into two equal parts has only at most $k-1$ edges across. 
    Finally, we will assume that $n>k> 4$, otherwise we can solve the problem in polynomial time by trying all subsets of $k$ edges and checking whether they form a cut between two parts of size $n$. 
    
    Note that if $G=(V,E)$ is a cubic graph with $2n$ vertices and $(X,Y)$ is a partition of $V$ into two subsets of size $n$ such that there are precisely $k$ edges with one endpoint in $X$ and the other in $Y$, then there are $\frac{3n-k}{2}$ edges with both endpoints in $X$ and $\frac{3n-k}{2}$ edges with both endpoints in $Y$. We are now ready to describe our reduction to \FDwE{EFX}.

    \newcommand{\xii}{\ensuremath{10n^2}}
    \newcommand{\dij}{4n}
    \newcommand{\xij}{\ensuremath{10n^2-\dij}}
    \newcommand{\yii}{\ensuremath{5n^2}}
    \newcommand{\yij}{\ensuremath{5n^2-\dij}}

    \begin{itemize}
        \item The set of agents is $N=\{1,\ldots, 3n\}$, where each agent $i\in [3n]$ is associated with an edge $e_i\in E$; 
        \item The set of items $\items$ is split into three sets (item types):
        \begin{itemize}
            \item set $X$ of $\frac{3n+k}{2}$ items; 
            \item set $Y$ of $\frac{3n-k}{2}$ items;
            \item set $Z$ of $3n-k$ items.
        \end{itemize}
        \end{itemize}
        Intuitively, we want to link any potential solution $(X',Y')$ of \textsc{Min Bisection} to the following allocation $\pi$ of items. Agent $i$, associated with edge $e_i$, receives some item from~$Y$ only if $e_i$ has both endpoints in $Y'$. Otherwise, $i$ receives some item from $X$. In addition, $i$ receives an item from $Z$ if and only if $e_i$ does not belong to the cut $(X',Y')$. To make such an allocation EFX, we define the valuations as follows.
        \begin{itemize}
        \item For all items $x\in X$, all $i\in [3n]$, $V_i(i, x) = \xii$.
        \item For all items $x\in X$, all $i, j\in N$ such that $e_i$ and $e_j$ do not share an endpoint $V_i(j,x) = \xij$.
        \item For all items $y\in Y$, all $i\in [3n]$, $V_i(i, y) = \yii$.
        \item For all items $y\in Y$, all $i, j\in N$ such that $e_i$ and $e_j$ do not share an endpoint $V_i(j,y) = \yij$.
        \item For all items $z\in Z$ and all $i\in [3n]$ we have $V_i(i, z) = 1$.
        \item All the remaining values are zero. 
    \end{itemize}
    Note that the valuation function only uses six values $0, 1, \yij, \yii, \xij, \xii$ and $3$ different item-types.

    First, let's assume $(X', Y')$ is a partition of $V$ into equal size parts such that there are exactly $k$ edges with one endpoint in $X'$ and one endpoint in $Y'$. Let $\pi$ be the allocation described above. 
    Namely: \begin{itemize}
        \item $\pi_i =\{x, z\}$ for some $x\in X$ and $z\in Z$ if both endpoints of $e_i$ are in $X'$;
        \item $\pi_i =\{y, z\}$ for some $y\in Y$ and $z\in Z$ if both endpoints of $e_i$ are in $Y'$;
        \item $\pi_i=\{x\}$ for some $x\in X$ if $e_i$ has one endpoint in $X'$ and the other endpoint is in $Y'$.
    \end{itemize}
    To see that $\pi$ is EFX, we first consider the case when edges $e_i, e_j\in E$ do not share an endpoint. If in addition the agents $i$ and $j$ receive equal numbers of items from $Z$, we have $V_i(\pi)=V_i(\piswap)$. Indeed, $i$ values items from $X$ and $Y$ by exactly $\dij$ more on itself than on $j$, and since both $i$ and $j$ have exactly one such item, the value does not change after swap. In particular, there is no envy already. 
    
    If on the other hand one of the agents, say $i$, is not assigned any item from $Z$, we have $\pi_i = \{x\}$ and $\pi_j$ contains one item from $Z$ and one item from $X\cup Y$. Then $j$ does not envy $i$ since  $V_j(\piswap)=V_j(\pi)-1$. However, $i$ envies $j$ as $V_i(\piswap)=V_i(\pi)+1$. Assume that $\lambda$ is obtained from~$\pi$ by removing one item and it decreases the envy (i.e.,~$\lambda$ satisfies the Properties 1 and 2 of Definition~\ref{def:EFX}). Since the valuations are integer-valued and $V_i(\piswap)-V_i(\pi)=1$, we have $V_i(\lambda)-V_i(\lambda)\le 0$. Hence, the envy is eliminated.
    
    It remains to consider the case when the edges $e_i$ and $e_j$ share an endpoint. Again, there are two possibilities. If $\pi_i$ and $\pi_j$ contain equal numbers of items from $Z$, then since~$e_i$ and $e_j$ share a vertex, it is straightforward to see that the unique item in $\pi_i\setminus Z$ and the unique item in $\pi_i\setminus Z$ are either both from $X$ or both from $Y$. Hence $V_i(\pi)=V_i(\piswap) = V_j(\pi)=V_j(\piswap)$ and there is no envy. 
    
    If only $\pi_j$ contains an item from $Z$, then $\pi_i =\{x\}$ for some item $x \in X$. Since all the items are goods, removing~$x$ from $\pi$ makes $\pi_i$ empty and clearly removes envy from $j$ towards $i$ if there was any envy to begin with. For agent $i$, if $\pi_j=\{y,z\}$ for some $y\in Y$ and $z\in Z$, then $V_i(\piswap) = V_i(\pi) -\xii+\yii+1 < 0$, so there is no envy from $i$ towards $j$. Finally, if $\pi_j=\{x',z\}$ for some $x'\in X$ and $z\in Z$, then $V_i(\piswap)-V_i(\pi)=1$. Hence, if removal of some item decreases the envy, it completely eliminates it.
    
    We showed that Definition~\ref{def:EFX} holds for all pairs $i,j\in N$. Therefore, if $(X', Y')$ is a partition of $G$ into two equal size parts such that there are exactly $k$ edges between $X'$ and $Y'$, then the assignment $\pi$ as defined above is EFX. 

    Towards the other direction, let $\pi$ be an EFX assignment. We first observe that $|\pi_i\cap (X\cup Y)| = 1$ for all $i\in N$. For the sake of contradiction, assume this is not the case. Since $|X\cup Y|= N$, there is $i\in N$ such that $|\pi_i\cap (X\cup Y)| = 0$ and $j\in N$ such that $|\pi_j\cap (X\cup Y)| \ge 2$. Let $a\in \pi_j\cap (X\cup Y)$ and let $\lambda$ be the allocation obtained from $\pi$ by removing $a$. Then $V_i(\lambda) - V_i(\lambdaswap) = V_i(\pi) - V_i(\piswap) + \dij$, so property 2. of Definition~\ref{def:EFX} is satisfied. Moreover, as $j$ still contains at least one item from $X\cup Y$ in $\lambda$, we have $V_i(\lambdaswap) \ge  V_i(\lambda) + \dij +|Z\cap \pi_j| - |Z\cap \pi_i|$. Since $\dij > |Z|$, it follows that $V_i(\lambdaswap) >  V_i(\lambda)$, a contradiction to $\pi$ being EFX. It follows that $|\pi_i\cap (X\cup Y)| = 1$ for all $i\in N$. 
    
    Now, note that $|Z| = |N|-k$, hence there are at least $k$ agents in $N$ without any item from $Z$. Let us show that there are precisely $k$ agents without any item in $Z$ and all the other agents receive exactly one item from $Z$. For the sake of contradiction, let $j$ be an agent such that $|\pi_j\cap Z| \ge 2$. Since $G$ is cubic, there are precisely four edges that share an endpoint with $e_j$. Because $k > 4$, there is an agent $i$ such that $\pi_i\cap Z = \emptyset$ and $e_i$ and $e_j$ do not share an endpoint. Let $\lambda$ be an assignment obtained from $\pi$ by removing some $z\in \pi_j\cap Z$. Then $V_i(\lambda) - V_i(\lambdaswap) = V_i(\pi) - V_i(\piswap) + 1$, so property 2. of Definition~\ref{def:EFX} is satisfied. However, $V_i(\lambdaswap) = V_i(\lambda) + |\pi_j\cap Z\setminus\{z\}|\ge V_i(\lambda)+1$, so $\pi$ is not EFX. It follows that there are precisely $k$ agents without any item from $Z$, while any other agent receives exactly one item from $Z$.

    Consider the subset of agents $I = \{i\in N \mid Y\cap \pi_i \neq \emptyset\}$. We show now that if $i\in I$ and $j\in N\setminus I$ are such that $e_i$ and $e_j$ share an endpoint, then $\pi_j = \{x\}$ for some $x\in X$. Since $j\in N\setminus I$, it holds that $\pi_j\cap (X\cup Y) = \{x\}$ for some $x\in X$, hence we only need to show that $\pi_j\cap Z = \emptyset$. Otherwise, let $\pi_j\cap Z = \{z\}$ and let $\lambda$ be the assignment obtained from $\pi$ by removing $z$. Then $V_i(\lambda) - V_i(\lambdaswap) = V_i(\pi) - V_i(\piswap) + 1$, so property 2. of Definition~\ref{def:EFX} is satisfied: the envy of $i$ towards $j$ is decreased. However, it is not eliminated since $V_i(\lambdaswap) = V_i(\lambda) -\yii + \xii> V_i(\lambda)$, which contradicts to $\pi$ being EFX.

    Let $Y'$ be the set of vertices of $G$ that are endpoints of edges in $E_I=\{e_i\mid i\in I\}$, and let $F$ be the set of edges of $G$ with precisely one endpoint in $Y'$. Since $G$ is cubic and $|E_I|=|I|=|Y|=\frac{3n-k}{2}$, the sum of degrees of vertices in $Y'$ is $3|Y'| = 2 |E_I| + |F| = 3n-k+|F|$. Every edge from $F$ shares an endpoint with an edge from $E_I$ and so any agent $j$ such that $e_j\in F$ does not have any item from $Z$. In particular, $|F| \le k$ and hence $|Y'| =n - \frac{k - |F|}{3} \le n$. 
    
    Let $Y''$ be any subset of $V(G)\setminus Y'$ of size $\frac{k-|F|}{3}$. Consider the partition $(Y'\cup Y'', V(G)\setminus (Y'\cup Y'')$ of $V(G)$ into two sets of size $n$. Since $G$ is cubic, there are at most $|F| + 3\frac{k-|F|}{3}=k$ edges with precisely one endpoint in each part. Therefore, $(G,k)$ is YES-instance of \textsc{Min Bisection}.
\end{proof}

Our negative results strongly indicate that in order to derive some positive results for general valuations, we {\em have to} further restrict ourselves. 
Interestingly, we show that if we combine the two parameters for which the problem is intractable, when we consider them independently, the problem becomes fixed-parameter tractable. 
Hence, we provide a dichotomy with respect to this combination of parameters.
Observe that Theorems~\ref{thm:NP-h:three-agents} and~\ref{thm:EFX:NPh:d} actually show para-\NP-hardness for the problem when parameterized by agents, item types, and number of different values in the valuation functions.

We start our journey for the first algorithmic results with instances, where the number of agents and the number of item types is bounded. 

\begin{theorem}\label{thm:FDE:EFX:FPT:types}
    The \FDwE{$\phi$} problem, where $\phi\in\{\text{EF},\text{EF1},\text{EFX}\}$, is fixed-parameter tractable when parameterized by the number of different item types $\types$ and the number of agents $|\agents|$ combined.
\end{theorem}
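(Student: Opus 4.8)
The plan is to model an allocation by how many items of each type each agent receives, and then to express the existence of a fair allocation of a prescribed ``shape'' as an integer linear program (ILP) whose number of variables depends only on the parameters $\types$ and $|\agents|$. Since by definition two items of the same item-type are valued identically by every agent in every position, an allocation $\pi$ is fully determined, as far as all three fairness notions are concerned, by the integer matrix $x=(x_{i,t})_{i\in\agents,\,t\in[\types]}$, where $x_{i,t}$ is the number of items of type $t$ received by agent $i$. These variables satisfy $x_{i,t}\ge 0$ and $\sum_{i}x_{i,t}=m_t$, where $m_t$ is the number of items of type $t$. Writing $w_{i,j,t}$ for the value agent $i$ assigns to an item of type $t$ when it is held by agent $j$, the utility $V_i(\pi)=\sum_{t}\sum_{j}x_{j,t}\,w_{i,j,t}$ and the swap difference $V_i(\piswap)-V_i(\pi)=\sum_t (x_{j,t}-x_{i,t})(w_{i,i,t}-w_{i,j,t})$ are \emph{linear} in $x$. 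Hence the EF constraint $V_i(\pi)\ge V_i(\piswap)$ over all ordered pairs is a system of $|\agents|(|\agents|-1)$ linear inequalities, and the EF case reduces immediately to an ILP in $|\agents|\cdot\types$ variables solvable by Lenstra's algorithm in \FPT\ time.

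The difficulty for EF1 and EFX is that their definitions quantify over which single item is removed, and the effect of a removal depends on which item-types are actually present in the two relevant bundles. To linearize this, I would branch over the $2^{|\agents|\cdot\types}$ possible support patterns $\sigma\colon\agents\times[\types]\to\{0,+\}$ recording, for each agent--type pair, whether $x_{i,t}=0$ or $x_{i,t}\ge 1$; this count is a function of the parameters only. Fixing a pattern makes ``which types are present in each bundle'' known, and I add the defining constraints $x_{i,t}=0$ or $x_{i,t}\ge 1$ accordingly. The crucial observation is that the per-type gain of an envy-reducing removal --- namely $g_t=w_{i,i,t}-w_{i,j,t}$ for removing a good of type $t$ from $\pi_j$, and $h_t=w_{i,j,t}-w_{i,i,t}$ for removing a chore of type $t$ from $\pi_i$ --- depends only on the type and the pair, not on the counts $x$. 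Consequently, once the support is fixed, the best attainable EF1 gain (the maximum of $0$ and of those $g_t,h_t$ whose type is present) and the worst-case EFX gain (the smallest \emph{positive} such gain over present types) are \emph{constants}. The EF1 condition for a pair then becomes the single linear inequality $V_i(\piswap)-V_i(\pi)\le \max\text{-gain}$, while the EFX condition becomes $V_i(\piswap)-V_i(\pi)\le \min\text{-gain}$, reading an empty minimum as $+\infty$; this correctly renders the pair vacuously EFX when no present item strictly reduces the envy, and it also subsumes the no-envy case via the earlier Observation (the right-hand side being positive).

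Putting this together, for each of the $2^{|\agents|\cdot\types}$ support patterns I would assemble the ILP consisting of the partition constraints, the support-defining constraints, and the fairness inequalities above for all ordered pairs, and test its feasibility; the original instance is a yes-instance iff some pattern yields a feasible ILP. Every such ILP has exactly $|\agents|\cdot\types$ variables, so by Lenstra's algorithm (or its refinements by Kannan and Frank--Tardos) each is decidable in $g(|\agents|\cdot\types)\cdot|\ical|^{\Oh{1}}$ time, and the overall running time $2^{|\agents|\cdot\types}\cdot g(|\agents|\cdot\types)\cdot|\ical|^{\Oh{1}}$ is fixed-parameter tractable in $\types+|\agents|$. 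The step requiring the most care --- and the main obstacle --- is verifying that fixing the support pattern genuinely converts the ``for any item'' universal quantifier of EFX (and the existential of EF1) into the claimed single linear inequality per pair: one must check that the binding removal is the present type of smallest positive gain for EFX and of largest gain for EF1, and that the vacuous and no-envy cases are treated so that the union over all support patterns captures exactly the set of fair allocations.
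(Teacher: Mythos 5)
Your proposal is correct and takes essentially the same route as the paper: the paper also enumerates, for every agent, which item-types occur in its bundle (its ``bundle-types'', i.e.\ your support patterns), sets up an ILP in the $|\agents|\cdot\types$ count variables with partition and support constraints, encodes EF as a linear inequality per ordered pair, and handles EF1/EFX by precomputing, per pair, the present item-type whose removal decreases envy the most (EF1) respectively the least-but-still-positively (EFX), before invoking Lenstra/Kannan/Frank--Tardos. Your treatment of the vacuous EFX case (empty minimum read as $+\infty$) and of the no-envy case is in fact spelled out more carefully than in the paper's own write-up.
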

\begin{proof}
    As the first step of our algorithm, we partition the items according to their types $T = \{T_1,\ldots,T_\types\}$ and compute the size $n_{T_1},\ldots,n_{T_\types}$ of each partition. For the rest of this proof, we will use the notion of \emph{bundle-types}. The bundle-type is defined by the subsets of different item types that has at least one representative in the bundle. It is easy to see that there are at most $2^\Oh{\types}$ bundle-types in total.

    Now, we guess for each agent its bundle-type. There are~$2^{{\types}^\Oh{|\agents|}}$ such guesses and for each guess, we construct an ILP that verifies whether the guess satisfies the given notion of fairness; in what follows, we assume envy-freeness (EF), but later we show how to tweak the construction to handle also the other notions. We denote by $B(i)$ the set of item types present in the agent's $i\in\agents$ bundle according to our guess. 
    In addition,
    we extend the definition of valuation to types and use $V_i(j,t)$ to denote how agent $i\in\agents$ values the item of type $t\in T$ assigned to agent $j\in\agents$.

    Our ILP contains $\Oh{|\agents|\cdot\types}$ variables $x_{t,i}$ representing the number of items of type $t$ assigned to the agent $i$. The constraints are as follows.
    \begin{align}
        \forall t\in T\colon                      & \sum_{i\in\agents} x_{t,i} = n_t \\
        \forall i\in\agents\colon  \forall t\in T\colon & x_{t,i} \begin{cases}
            \geq 1 & \text{if }t\in B(i),\\
             = 0 & \text{if }t\not\in B(i)
        \end{cases}\\
        \forall i,j\in\agents\colon               & \sum_{t\in T} \left(x_{t,i}\cdot V_i(i,t) + x_{t,j}\cdot V_i(j,t)\right)\nonumber\\
                                            &\geq \sum_{t\in T} \left(x_{t,j}\cdot V_i(i,t) + x_{t,i}\cdot V_i(j,t)\right) \label{eq:EF}%
    \end{align}

    The first set of $\Oh{\types}$ constraints ensures that all items are allocated. The second set of constraints of size $\Oh{|\agents|\cdot\types}$ secures that the items allocated to each agent correspond to the guessed bundle. Finally, the third set of $\Oh{|\agents|^2}$ constraints is to verify that the outcome is envy-free.
    
    If we are interested in EF1 allocations, then just before we construct the verification ILP, we compute for each pair of distinct agents $i,j\in N$ the item-type that decreases the envy the most. Observe that this depends solely on item-types present in their bundles and can be computed in polynomial time. Finally, we incorporate this removal into the ILP. Let $i,j\in N$ be two distinct agents and let $t\in T$ be a type of item that decreases envy from $i$ towards $j$ the most. Then we tweak the constraint \eqref{eq:EF} for this pair of agents to reflect the removal of one item of type $t$.

    For EFX allocations, the idea is very similar to the case of EF1 allocations. However, this time, instead of precomputing an item-type that decreases the envy the most, we determine an item-type that decreases the envy the least, but still by some positive value.
    
    It is well known that ILPs with parameter-many variables can be solved by a fixed-parameter algorithm~\citep{Lenstra1983,Kannan1987,FrankT1987}, and the theorem follows.
\end{proof}

Theorem~\ref{thm:FDE:EFX:FPT:types} above can be used to almost immediately give us the following corollary. The key ingredient here is to show that whenever the number of different values and the number of agents is bounded, so is the number of different item-types.

\begin{corollary}
    The \FDwE{$\phi$} problem, where $\phi\in\{\text{EF},\text{EF1},\text{EFX}\}$, is fixed-parameter tractable when parameterized by the number of agents $|N|$ and the number of different values~$d$ in agents' preferences.
\end{corollary}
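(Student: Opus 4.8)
The plan is to reduce the corollary to Theorem~\ref{thm:FDE:EFX:FPT:types} by showing that bounding the number of agents $|\agents|$ together with the number of distinct values $d$ automatically bounds the number of item-types $\types$, so that the fixed-parameter algorithm of Theorem~\ref{thm:FDE:EFX:FPT:types} applies with a parameter that is itself a function of $(|\agents|, d)$. The key observation is purely combinatorial: recall from the Preliminaries that the item-type of an item $a$ is the vector
\[
\left(V_1(1,a),\ldots,V_1(|\agents|,a),\,V_2(1,a),\ldots,V_{|\agents|}(|\agents|,a)\right),
\]
which has exactly $|\agents|^2$ coordinates. If the valuation function uses only $d$ distinct values across the whole instance, then each of these $|\agents|^2$ coordinates can take at most $d$ possible values, and therefore the total number of distinct type-vectors, and hence the number of item-types, satisfies $\types \le d^{\,|\agents|^2}$.

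First I would state this bound explicitly and note that it is a function of the two parameters $|\agents|$ and $d$ only. Then I would invoke Theorem~\ref{thm:FDE:EFX:FPT:types}, which gives a fixed-parameter algorithm for \FDwE{$\phi$} with running time $f(\types, |\agents|)\cdot|\ical|^{\Oh{1}}$ for some computable $f$. Substituting the bound $\types \le d^{\,|\agents|^2}$ yields a running time of the form $f\!\left(d^{\,|\agents|^2}, |\agents|\right)\cdot|\ical|^{\Oh{1}}$, which is of the shape $g(|\agents|, d)\cdot|\ical|^{\Oh{1}}$ for the computable function $g(|\agents|,d) = f(d^{\,|\agents|^2}, |\agents|)$. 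This is precisely a fixed-parameter algorithm in the combined parameter $(|\agents|, d)$, establishing the corollary for all three fairness notions $\phi\in\{\text{EF},\text{EF1},\text{EFX}\}$ simultaneously, since Theorem~\ref{thm:FDE:EFX:FPT:types} already covers all of them.

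The only mildly technical point, rather than a genuine obstacle, is to make sure the number of item-types is computed correctly from the input: given the instance one simply reads off each item's type-vector and counts the distinct vectors, which takes polynomial time and certainly produces at most $d^{\,|\agents|^2}$ classes. There is no real hard step here; the corollary is essentially a parameter-transfer argument, and the entire content is the elementary counting bound $\types \le d^{\,|\agents|^2}$ together with the closure of \FPT\ under taking a parameter that is upper-bounded by a computable function of the new parameters. One should just take care to phrase the conclusion so that it is clear the degree of the polynomial in $|\ical|$ remains independent of $|\agents|$ and $d$, which it does because the exponent inherited from Theorem~\ref{thm:FDE:EFX:FPT:types} is an absolute constant.
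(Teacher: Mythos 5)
Your proposal is correct and follows essentially the same route as the paper's own proof: the paper likewise observes that an item-type is determined by an $|\agents|\times|\agents|$ matrix of values, so with $d$ distinct values there are at most $d^{\Oh{|\agents|^2}}$ item-types, and then it invokes \Cref{thm:FDE:EFX:FPT:types} exactly as you do. Your write-up is just a more detailed rendering of the same parameter-transfer argument.
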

\begin{proof}
    Item types are defined as the number of different $|N|\times|N|$ matrices formed from agent preferences. If the number of agents and the number of different values is bounded, we have $\types \in d^\Oh{|N|^2}$ and, therefore, we can use \Cref{thm:FDE:EFX:FPT:types} to decide the given instance in \FPT time.
\end{proof}

Our next algorithmic result shows that if the number of items is constant, we can solve the problem efficiently. Again, our algorithm is more general and can handle all the fairness notions we study in this work.

\begin{proposition}
    The \FDwE{$\phi$} problem, where $\phi\in\{\text{EF},\text{EF1},\text{EFX}\}$ is in \XP when parameterized by the number of items $|\items|$.
\end{proposition}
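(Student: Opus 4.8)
The plan is to decide the problem by exhaustively enumerating every allocation. An allocation is nothing more than a function assigning each of the $m := |\items|$ items to one of the $n := |\agents|$ agents, so there are exactly $n^m$ allocations. The input lists all values $V_i(j,a)$ for $i,j\in\agents$ and $a\in\items$, hence its size $|\ical|$ is at least $n^2\cdot m$; in particular $n\le|\ical|$, and the number of allocations $n^m$ is bounded by $|\ical|^{\Oh{m}}$. Enumerating all of them therefore stays within \XP time when the number of items is the parameter.

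First I would iterate over all $n^m$ allocations $\pi$ and, for each, test in polynomial time whether $\pi$ satisfies the required notion $\phi$, reusing the verification argument from the proof of \Cref{thm:NP-h:three-agents}. For EF I would, for every ordered pair $i,j\in\agents$, compute the two sums $V_i(\pi)$ and $V_i(\piswap)$ and check $V_i(\pi)\ge V_i(\piswap)$. For EF1 I would, for every ordered pair $i,j$, loop over all $m$ items $a$, form $\lambda$ with $\lambda_\ell=\pi_\ell\setminus\{a\}$ for all $\ell$, and accept the pair as soon as some $a$ gives $V_i(\lambda)\ge V_i(\lambdaswap)$; the allocation is EF1 precisely if every pair is accepted. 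For EFX, by the Observation following \Cref{def:EFX} only envied pairs need to be examined: for each such pair I would loop over all items $a$, test whether removing $a$ satisfies Property~2 of \Cref{def:EFX} (it strictly decreases the envy), and, whenever it does, verify additionally that $V_i(\lambda)\ge V_i(\lambdaswap)$; the allocation fails EFX for this pair exactly when some item strictly decreases but does not eliminate the envy. Each of these checks runs in time polynomial in $n$ and $m$ for a single fixed $\pi$, so the total running time is $n^m\cdot\mathrm{poly}(n,m)=|\ical|^{\Oh{m}}$.

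The enumeration itself is immediate; the only place that requires care is the per-allocation verification, and in particular the EFX test, where one must faithfully encode the universally quantified condition of \Cref{def:EFX} --- checking the implication ``removal strictly decreases the envy $\Rightarrow$ the envy is eliminated'' for \emph{every} item, rather than searching for a single witnessing item as in EF1. Once this subroutine is shown to be polynomial and correct, the proposition follows directly, since we examine all allocations and output YES exactly when one of them is fair for $\phi$.
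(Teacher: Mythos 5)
Your proposal is correct and follows essentially the same approach as the paper: exhaustively enumerate all $|\agents|^{|\items|} \le |\ical|^{\Oh{|\items|}}$ allocations and verify each one for the relevant fairness notion in polynomial time. Your write-up is in fact slightly more explicit than the paper's about how the per-allocation verification works (in particular the universally quantified EFX check), but the underlying argument is identical.
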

\begin{proof}
    Our algorithm is a simple brute-force that, for each item $a\in\items$, guess the agent the item is allocated to in a fair allocation and it verifies in polynomial, whether the guess is correct. If the guessed allocation is fair, we return \emph{yes}. Otherwise, we continue with another possibility. Overall, there are $|\agents|^\Oh{|\items|}$ possible allocations and each fairness notion can be checked in polynomial time. That is, the \FDwE{$\phi$} problem, $\phi\in\{\text{EF},\text{EF1},\text{EFX}\}$, is clearly in \XP when parameterized by the number of items. As we exhaustively tried all possible allocations, the algorithm is trivially correct.
\end{proof}

The following result shows that we cannot hope for an \FPT algorithm for \FDwE{EF} even in severely restricted instances. This contrasts with the setting without externalities, where an \FPT algorithm for the problem trivially exists because of the simple fact that once the number of items is smaller than the number of agents, the EF solution can never exist.

\begin{theorem}
    The \FDwE{EF} is \Wh when parameterized by the number of items $|\items|$, even under binary valuations and with no chores. Unless ETH fails, there is no algorithm solving such instances in $g(|\items|)\cdot (|\agents|)^{o(\sqrt{|\items|})}$ time for any computable function $g$.
\end{theorem}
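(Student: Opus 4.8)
The plan is to give a parameterized reduction from \textsc{Multicolored Clique} (one could equally use $k\times k$ \textsc{Grid Tiling}), which is \Wh when parameterized by the solution size~$k$ and, under ETH, admits no $f(k)\cdot n^{o(k)}$ algorithm. The shape of the claimed bound, with $\sqrt{|\items|}$ in the exponent, dictates that the reduction must create $\Theta(k^2)$ items while keeping the number of agents polynomial in the size of the source graph: then $\sqrt{|\items|}=\Theta(k)$ and $|\agents|=\mathrm{poly}(n)$, so any algorithm for \FDwE{EF} running in $g(|\items|)\cdot|\agents|^{o(\sqrt{|\items|})}$ time would solve \textsc{Multicolored Clique} in $g(\Theta(k^2))\cdot\mathrm{poly}(n)^{o(k)}=f(k)\cdot n^{o(k)}$ time, contradicting ETH. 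Since $|\items|=\Theta(k^2)$ is a computable function of~$k$ and the reduction runs in polynomial time, the very same construction witnesses \Wh for the parameter~$|\items|$.

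Concretely, I would introduce one item $a_{pq}$ for every pair of color classes $1\le p<q\le k$, so that $|\items|=\binom{k}{2}$, together with one agent $\alpha^{pq}_e$ for every edge $e$ joining class $V_p$ to class $V_q$; placing $a_{pq}$ on $\alpha^{pq}_e$ is to be read as ``the clique uses edge~$e$ between positions $p$ and~$q$.'' Since there are far more agents than items, almost every agent ends up with an empty bundle, and the entire construction is organized around the normalized, no-weak-chores envy test: writing $D_i^j(a)=V_i(i,a)-V_i(j,a)\in\{0,1\}$, agent $i$ does not envy~$j$ exactly when $\sum_{a\in\pi_i}D_i^j(a)\ge\sum_{a\in\pi_j}D_i^j(a)$. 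I would use cheap empty ``gadget'' agents to pin the domain of each item (forcing $a_{pq}$ onto a genuine edge-agent of the pair $(p,q)$, so that exactly one edge-agent per pair becomes non-empty), and then engineer the externality values $V_i(j,\cdot)$ so that, among the chosen non-empty holders, mutual non-envy holds precisely when the selected edges are pairwise consistent, i.e. agree on their shared-color endpoint --- which is equivalent to the selected edges forming a $k$-clique.

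For the forward direction I would take a multicolored clique on vertices $w_1,\dots,w_k$, let $\pi$ place each $a_{pq}$ on $\alpha^{pq}_{\{w_p,w_q\}}$, and verify that $\pi$ is envy-free: no holder envies an empty agent (the right-hand side of the test vanishes), the empty gadget agents do not envy the chosen holders (their domains are met by any valid edge of the correct pair), and two chosen holders do not envy each other because their edges agree on every shared endpoint. For the backward direction I would argue that any envy-free allocation must respect all the domains, hence selects exactly one edge per color-pair, and that the absence of envy among the induced non-empty agents forces these edges to be mutually endpoint-consistent; reading off the common endpoints $w_p$ then recovers a $k$-clique.

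The main obstacle is exactly the design and verification of the consistency gadget. Because weak-chores are forbidden, every difference $D_i^j(a)$ is non-negative, so an empty agent's envy test degenerates into a pure conjunction of per-item constraints and, on its own, cannot couple the placements of two distinct items; all genuine coupling must therefore be routed through the small set of non-empty holders, whose tests carry slack from the items they actually hold. The delicate point is to choose the externalities so that simultaneously (i)~this slack lets the intended clique allocation survive, (ii)~every endpoint inconsistency between two holders produces a strictly violated test, and (iii)~no agent that is forced to be empty in the intended allocation inadvertently envies a chosen holder. Balancing these three requirements --- making the only envy-free allocations the consistent ones, without letting the many empty agents veto the correct one --- is the crux of the proof; once the valuation gadget is fixed, both directions and the ETH/\Wh bookkeeping are routine.
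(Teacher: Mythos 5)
Your global skeleton (a parameterized reduction from \textsc{Multicolored Clique}, $\Theta(k^2)$ items against polynomially many agents, and the \Wh/ETH bookkeeping) is exactly the paper's, and that part is sound. The genuine gap is the consistency gadget itself, which you defer as ``the crux'': in the form you specify, it provably cannot exist. You insist on exactly $\binom{k}{2}$ items, each pinned to an edge-agent of its own pair, so in any intended allocation every holder holds exactly one item. Write $D_i^j(a)=V_i(i,a)-V_i(j,a)$; with no weak-chores $D_i^j(a)\ge 0$, so emptying an envier's bundle can only increase its envy: if holder $i$ with bundle $\{a_{pq}\}$ envies holder $j$ with bundle $\{a_{pr}\}$ (i.e.\ $D_i^j(a_{pq})=0$ and $D_i^j(a_{pr})=1$), then an \emph{empty} agent $i$ envies that same holder $j$ \emph{a fortiori}, since its test degenerates to $0\ge D_i^j(a_{pr})$. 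Moreover, as you observe yourself, empty agents impose only unary constraints (a conjunction of per-item-placement bans), and a third holder's test involves only the two items that it and its target hold, so the envy witnessing an inconsistency between the edges chosen for pairs $(p,q)$ and $(p,r)$ must come from one of those two holders themselves. Combining these facts kills the plan: whichever orientation you pick, say ``inconsistent $e\in E_{pq}$, $e'\in E_{pr}$ forces $\alpha^{pq}_e$ to envy $\alpha^{pr}_{e'}$ when both hold,'' you must set $D_i^j(a_{pr})=1$ for $i=\alpha^{pq}_e$, $j=\alpha^{pr}_{e'}$; but then in the intended allocation built from a clique that uses $e'$ (and therefore not $e$), agent $\alpha^{pq}_e$ is empty and envies the holder $\alpha^{pr}_{e'}$, so the forward direction fails. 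Since the valuations must be fixed before you know which edges lie in cliques, your requirements (ii) and (iii) are mutually exclusive, not merely ``delicate.''

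The paper escapes precisely this trap by adding $k$ selection-items and one vertex-agent per vertex (still $|\items|=k+\binom{k}{2}=\Oh{k^2}$), so that the coupling runs between the two kinds of items rather than between two incidence-items. An incidence-item $a_{i,j}$ held by an edge-agent is valued $0$ by a vertex-agent whose vertex lies on that edge and $1$ by every other vertex-agent of color $i$ or $j$; hence every \emph{empty} vertex-agent whose vertex lies on a chosen edge envies that edge-agent. In the clique allocation the endpoints of the chosen edges are exactly the selected vertices, and those vertex-agents hold the selection-items, which give them exactly the one unit of slack needed to tolerate the swap (they would trade a selection-item worth $1$ for an incidence-item worth $1$). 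This ``compensate the would-be witnesses'' mechanism is what your construction lacks, and with only $\binom{k}{2}$ items there is nothing available to compensate them with. Any repair of your proof will need extra items playing the role of the paper's selection-items, or some equivalent source of slack for the agents whose envy encodes inconsistency.
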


\begin{proof}
    We prove the hardness by giving a parameterized reduction from the \textsc{Multicolored Clique} problem. Here, we are given an integer $k$, a $k$-partite graph $G=(U = U_1\cup U_2 \cup \cdots \cup U_k, E)$, and our goal is to find a set $K\subseteq U$ of size $k$ such that $G[U]$ is a complete graph. The \textsc{Multicolored Clique} problem is known to be \Wh when parameterized by $k$, even if all parts $U_i$ are of the same size~$n'$ and there is the same number $m'$ of edges between each pair of distinct parts $U_i$ and $U_j$~\citep{Pietrzak2003,FellowsHRV2009}. For each $i\in[k]$, we call the set $U_i = \{u_i^1,\ldots,u_i^{n'}\}$ a \emph{color class} and for two distinct $i,j\in[k]$, $i<j$, we denote by $E_{i,j} = \{e_{i,j}^1,\ldots,e_{i,j}^{m'}\}$ the set of all edges between $U_i$ and $U_j$; formally, $E_{i,j} = \{\{u_i^\ell,u_j^{\ell'}\}\in E\mid u_i^\ell \in U_i \land u_j^{\ell'} \in U_j\}$. We can also assume that $m' > n' > k^2 + k$ as otherwise, the size of the instance is bounded in terms of the parameter, and we can solve it by a trivial brute-force algorithm in \FPT time.

    \medskip
    \noindent\textbf{Construction.}\hspace{0.25cm}
    Given an instance $\mathcal{I} = (G,k)$ of the \textsc{Multicolored Clique} problem, we construct an equivalent instance $\mathcal{J}$ of the \FDwE{EF} problem as follows. 
    
    First, we define the set of agents $\agents$ and the set of items $\items$. For every color class $U_i$, $i\in[k]$, we create $n'$ \emph{vertex-agents} $w_{i}^1,\ldots,w_{i}^{n'}$, and we denote the set of all vertex-agents for a single color $i$ by~$W_i$. We say that vertex-agent $w_i^\ell$, $\ell\in[n']$, is of color $i$. The vertex-agents of color $i$ are in one-to-one correspondence with the vertices of color $i$ in the input graph $G$. Similarly, for each pair of distinct colors $i,j\in[k]$, $i<j$, we create $m'$ \emph{edge-agents} $f_{i,j}^1,\ldots,f_{i,j}^{m'}$ which are in one-to-one correspondence with edges of $E_{i,j}$. By $F_{i,j}$, we denote the set of all edge-agents corresponding to edges in $E_{i,j}$. Sometimes, we say that agents are from the same \emph{group} if all belong to the same $W_i$ or $F_{i,j}$. In total, there are $|U| + |E|$ agents. The set of items consists of $k$ \emph{selection-items} $a_1,\ldots,a_k$ and of a single \emph{incidence-item}~$a_{i,j}$ for each pair of distinct $i,j\in[k]$, $i < j$. It holds that $|\items| = k + \binom{k}{2} = \Oh{k^2}$. The high-level idea behind the construction is that, by allocation of selection-items, we select a vertex for the clique in~$G$. The incidence-items then ensure that in the original graph, there is an edge between the vertices selected by selection-items.
    
    To secure the desired behavior of the equivalent instance $\mathcal{J}$, we define the valuations of the agents as follows.
    \begin{description}
        \item[Edge-agent $f_{i,j}^\ell$.]
            The edge-agents do not care where the selection items are allocated, as the value they gain from these items is always zero. Formally, for every selection-item~$a$ and for each agent $x\in\agents$, we set $V_{f_{i,j}^\ell}( x, a ) = 0.$
            
            The situation with incidence-items is a little bit more complicated. Let $a_{i',j'}$ be an incidence-item. The edge-agent $f_{i,j}^\ell$ gains utility $1$ if and only if $i' = i$, $j' = j$, and the item $a_{i',j'}$ is allocated to some edge-agent from the group $F_{i,j}$. Formally, the valuations for an incidence-item $a_{i',j'}$ are as follows:
            \[
                V_{f_{i,j}^\ell}(f,a_{i',j'}) = \begin{cases}
                    1 & \text{if } f \in F_{i,j} \text{, } i' = i\text{, and } j' = j,\\
                    0 & \text{otherwise.}
                \end{cases}
            \]
            Observe that neither selection-items nor incidence-items are chores for edge-agents.
        \item[Vertex-agent $w_i^\ell$.] 
            The vertex-agent $w_i^\ell$ has positive utility if the corresponding selection-item $a_i$ is allocated to any vertex-agent of color $i$. For all other selection-items $a_{i'} \not= a_i$, the utility of agent $w_i^\ell$ is zero regardless of to whom $a_{i'}$ is allocated. This ensures that the selection-items are allocated to vertex-agents of the correct color~$i$. Formally, the valuation of the agent $w_i^\ell$ for a selection-item $a_{i'}$ is as follows:
            \[
                V_{w_i^\ell}(w,a_{i'}) = \begin{cases}
                    1 & \text{if } w \in W_i \text{ and } i' = i,\\
                    0 & \text{otherwise.}
                \end{cases}
            \]
            
            Next, we define the value the vertex-agent $w_i^\ell$ gains from an incidence-item $a_{i',j}$. As with selection-items, the vertex-agent has zero value for all incidence-items with $i'\not=i$ and $j\not=i$. In other words, the `interesting' selection-items for $w_i^\ell$ are only those that represent an edge with one endpoint in the color class $U_i$. Specifically, the vertex-agent $w_i^\ell$ has a value $1$ if and only if an interesting selection-item is allocated to $w_i^\ell$ itself or to an edge-agent representing an edge $e \in E$ such that $u_i^\ell \not\in e$. It is worth observing that $w_i^\ell$ has zero value if a selection-item $a_{i,j}$ or $a_{i',i}$ is allocated to an agent representing an edge $e$ such that $u_i^j \in e$. Formally, the valuation of the vertex-agent $w_i^\ell$ for an incidence-item $a_{i',j}$ is as follows:
            \[
                V_{w_i^\ell}(w,a_{i',j}) = \begin{cases}
                    0 & \text{if } i' \not= i \text{ and } j \not = i,\\
                    1 & \text{if } w = w_i^\ell,\\
                    1 & \text{if } w = f_{i',j}^{\ell'} \text{ and } u_i^\ell \not\in e_{i',j}^{\ell'}\text{, and}\\
                    0 & \text{if } w = f_{i',j}^{\ell'} \text{ and } u_i^j \in e_{i',j}^{\ell'}.
                \end{cases}
            \]
            It is again easy to see that no item is a chore for vertex-agents.
    \end{description}
    
    This finishes the description of the construction. Note that the valuation functions are clearly binary and we have already discussed that no item is a chore.

    \medskip
    \noindent\textbf{Left to Right Implication.}\hspace{0.25cm}
    For the correctness, assume first that $\mathcal{I}$ is a \emph{yes}-instance and $K = \{u_{1}^{\ell_i},\ldots,u_{k}^{\ell_k}\}$ is a solution clique. We create an allocation $\pi$ in the following way. For each selection-item $a_i$, $i\in[k]$, we set $\pi(a_i) = w_i^{\ell_i}$. Next, for each incidence-item $a_{i,j}$, we set $\pi(a_{i,j}) = f_{i,j}^{\ell}$, where $e_{i,j}^{\ell} = \{u_i^{\ell_i}, u_j^{\ell_j} \}$. Observe that such an edge-agent $f_{i,j}^{\ell}$ always exists, since $K$ is a clique in $G$ and we created an edge-agent for each edge of $G$. We claim that $\pi$ is an envy-free allocation.
    
    Clearly, for all $f_{i,j}^\ell$, $\ell\in[m']$, the utility is exactly $1$ and, by the definition of their valuations, they are worse in every allocation in which they swap bundles with an agent from other groups, and their utility remains the same if they swap with an edge-agent from the same group. 
    
    By the same argumentation, the vertex-agents do not envy other vertex-agents. Therefore, what remains to check is that no vertex-agent $w_i^\ell$ envy to an edge-agent $f_{i',j}^{\ell'}$. If $i\not\in \{i',j\}$, then there is no envy from $w_i^\ell$ as his utility for the item allocated to $f_{i',j}^{\ell'}$ is always zero and there are no chores. To finalize the implication, we examine separately two cases based on the size of the $w_i^\ell$'s bundle.
    
    Let $w_{i}^\ell$ be a vertex-agent with an empty bundle. As there cannot be envy between two agents with empty bundles, we can assume that the bundle of $f_{i',j}^{\ell'}$ is not empty. Moreover, as in the previous argument, we have $i\in\{i',j\}$ and, without loss of generality, assume that $i'= i$. Assume that $u_i^\ell\not\in e_{i',j}^{\ell'}$. Then, we have $V_{w_i^\ell}(w_i^\ell,a_{i',j}) = 1$ and $V_{w_i^\ell}(f_{i',j}^{\ell'},a_{i',j}) = 1$. That is, the agent is indifferent whether $a_{i',j}$ is allocated to her or to $f_{i',j}^{\ell'}$, resulting in the agent $w_i^\ell$ not being envious of any other agent. The situation where $u_i^\ell \in e_{i',j}^{\ell'}$ cannot occur in this case by the definition of $\pi$.

    The last remaining case is when the vertex-agent $w_i^\ell$'s bundle is non-empty. As there are no chores, there is never an envy from agents with non-empty bundle towards agents with empty bundles. Again, we can, without loss of generality, assume that $i' = i$. According to the definition of valuations, the agent has zero externality from the item $a_{i,j}$ allocated to $f_{i,j}^{\ell'}$ and would prefer to swap with this agent as $w_i^\ell$ values $a_{i,j}$ $1$ if it is allocated to her. However, by swapping bundles, agent $w_i^\ell$ loses the selection-item allocated to her. Consequently, the utility in both allocations is the same and there is no envy from agent $w_i^\ell$ towards any other edge-agent with non-empty bundle.

    We examined all possible combinations of agents and we proved that $\pi$ is indeed an envy-free allocation. Therefore, $\mathcal{J}$ is also a \emph{yes}-instance.

    \medskip
    \noindent\textbf{Right to Left Implication.}\hspace{0.25cm}
    In the opposite direction, let $\mathcal{J}$ be a \emph{yes}-instance and $\pi$ be an envy-free allocation. To prove that $\mathcal{I}$ is also a \emph{yes}-instance, we use several auxiliary claims. First, we prove that all selection-items are allocated to vertex-agents and incidence-items are allocated to edge-agents.

    \begin{claim}\label{cl:selection-color:incidence-edge}
        Let $a$ be a selection-item and $b$ be an incidence-item. In every-free allocation $\pi$, $\pi(a)$ is a vertex-agent and $\pi(b)$ is an edge-agent.
    \end{claim}
    \begin{claimproof}
        Let us start with incidence-items. Assume that $\pi$ is an envy-free allocation and let $b$ be an incidence-item such that $\pi$ allocates $b$ to a vertex-agent $w$. Let $f$ be an edge-agent such that $V_f(f,b) = 1$. It follows by the definition of valuations that this agent has nonzero utility if and only if the item $b$ is allocated to some agent in the $f$'s group. Consequently, the current utility of the edge-agent $f$ is zero and can be increased by swapping his bundle with the bundle of agent $w$. Thus, there is an envy of agent $f$ towards agents $w$, which contradicts the fact that $\pi$ is an envy-free allocation. Consequently, no incidence-item is allocated to a vertex-agent.
    
        For the second part of the claim, suppose that the selection-item $a$ is allocated to an edge-agent $f_{i,j}^\ell$. Let $w$ be, without loss of generality, a vertex-agent of color $i$ such that $\pi_w = \emptyset$. Observe that such an agent always exists, since $n' > k^2 + k > |\items|$. Then, the utility of $w$ is $V_w(\pi) = \mu + V_w(\pi_w) = \mu + 0 = \mu$ in $\pi$ and $V_w(\pi^{w \leftrightarrow f_{i,j}^\ell}) = \mu + V_w(\pi_{f_{i,j}^\ell})$ in the allocation where $w$ and $f_{i,j}^\ell$ swap their bundles. Since $\pi_{f_{i,j}^\ell}$ contains the item $a$ and maybe some incidence-items the agent $w$ values at least the same if they are allocated to him or to $f_{i,j}^\ell$, the utility of $a$ increases in the latter allocation by at least one. Therefore, there is an envy of $w$ towards $f_{i,j}^\ell$, which contradicts that $\pi$ is an envy-free allocation. Thus, in every envy-free allocation, the selection-items are necessarily allocated to vertex-agents.
    \end{claimproof}
    
    With \Cref{cl:selection-color:incidence-edge} in hand, we know that selection-items are necessarily allocated to vertex-agents and incidence-items are allocated to edge-agents. In the following claim, we even strengthen this property and show that each selection-item is allocated to a vertex-agent of the matching color.

    \begin{claim}\label{cl:selectMatchesColor}
        Let $a_i$, $i\in[k]$, be a selection-item. In every envy-free allocation $\pi$, $\pi(a_i) = w_i^\ell$ for some $\ell\in[n']$.
    \end{claim}
    \begin{claimproof}
        By \Cref{cl:selection-color:incidence-edge}, we know that $a$ is necessarily allocated to some vertex-agent, as otherwise, the allocation $\pi$ would not be envy-free. For the sake of contradiction, let $\pi$ be an envy-free allocation such that $\pi(a_i) = w$, where $w$ is a vertex-agent of color $j\not= i$. Using \Cref{cl:selection-color:incidence-edge}, we know that $w$'s bundle consists only of selection-items. Let $w_i^\ell$ be a vertex-agent of color $i$ with an empty bundle. The utility of $w_i^\ell$ can be decomposed as $\mu + V_{w_i^\ell}(\pi_{w_i^\ell})$, where $V_{w_i^\ell}(\pi_{w_i^\ell})$ is always zero and $\mu$ is the utility agent $w_i^\ell$ has from externalities. Recall that the only positive utility a vertex-agent has from a selection-item is when a selection-item of the same color is allocated to him or any other agent of this color. Consequently, only the externalities from edge-agents contribute to $\mu$, and more importantly, the agent $w_i^\ell$ has zero externality from the bundle of agent $w$. Therefore, if agents $w_i^\ell$ and $w$ swap their bundles, the utility of $w_i^\ell$ will be $\mu + V_{w_i^\ell}(\pi_{w})$, where $V_{w_i^\ell}(\pi_{w})$ is exactly one due to the item $a_i$. That is, agent $w_i^\ell$ has envy towards agent $w$, which is a contradiction to $\pi$ being an envy-free allocation. Thus, all envy-free allocations have to allocate selection-items to the vertex-agents of the matching color.
    \end{claimproof}

    Next, we show that the incidence-items posses similar properties as selection-items, that is, they are allocated to an edge-agent in the correct group.
    \begin{claim}
        Let $a_{i,j}$, $i,j\in[k]$ and $i < j$, be an incidence-item. In every envy-free allocation $\pi$, $\pi(a_{i,j}) = f_{i,j}^\ell$ for some $\ell\in[m']$.
    \end{claim}
    \begin{claimproof}
        By \Cref{cl:selection-color:incidence-edge}, we know that $a_{i,j}$ is allocated to some edge-agent $f_{i',j'}^{\ell'}$. For the sake of contradiction, assume that $\{i',j'\} \cap \{i,j\} \not= \{i,j\}$. Let $f_{i,j}^\ell$ be an edge-agent with an empty bundle in $\pi$. As the item $a_{i,j}$ is allocated to an edge-agent from a different group, the utility of $f_{i,j}^\ell$ is zero. However, by swapping the bundle with $f_{i',j'}^{\ell'}$, his or her utility becomes $1$. Consequently, there is envy, which contradicts that $\pi$ is envy-free. Thus, each envy-free allocation allocates each incidence-item $a_{i,j}$ to an edge-agent $f_{i,j}^\ell$ for some $\ell\in[m']$.
    \end{claimproof}

    Finally, we create a set $K = \{ u_i^\ell \in U \mid \pi(a_i) = w_i^\ell \}$ and claim that it induces a clique of the correct size in $G$. Clearly, the set $K$ is of size $k$ as there is exactly one selection-item for each $i\in[k]$, and by \Cref{cl:selectMatchesColor}, $K$ contains exactly one vertex of each color class. Suppose that there exists a pair of distinct vertices $u_i^\ell,u_{j}^{\ell'} \in K$ such that $\{u_i^\ell,u_{j}^{\ell'}\}\not\in E$. Then, in the equivalent instance~$\mathcal{J}$, the envy-free allocation $\pi$ allocates the selection-item $a_i$ to $w_i^\ell$, the selection-item $a_{j}$ to $w_{j}^{\ell'}$, and the incidence-item $a_{i,j}$ to some~$f_{i,j}^{\ell''}$. Moreover, since there is no edge between $u_i^\ell$ and $u_j^{\ell'}$ in $G$, it follows that $u_i^\ell\not\in e_{i,j}^{\ell''}$ or $u_j^{\ell'} \not\in e_{i,j}^{\ell''}$. Without loss of generality, let $u_i^\ell \not\in e_{i,j}^{\ell''}$. Then there exists a vertex-agent $w_i^*\in W_i$ such that $u_i^* \in e_{i,j}^{\ell''}$ and with an empty bundle. The utility of $w_i^*\in W_i$ is $V_{w_i^*}(\pi) = \mu + V_{w_i^*}(\emptyset) + V_{w_i^*}(f_{i,j}^{\ell''}) = \mu + 0 + 0$, where $\mu$ is the value that agent $w_i^*$ gains from the bundles of all agents except his own and the one of $f_{i,j}^{\ell''}$ (this value remains the same after the swap). After swapping the bundle with the agent $f_{i,j}^{\ell''}$, the agent $w_i^*$ receives the item $a_{i,j}$ that he values $1$, and his overall utility increases. This contradicts that $\pi$ is an envy-free allocation. Therefore, $\{u_i^\ell,u_j^{\ell'}\}\in E$ for every pair of distinct $u_i^\ell,u_j^{\ell'}\in K$, $K$ is a clique, and $\mathcal{I}$ is also a \emph{yes}-instance. This finishes the correctnes of the construction.

    \medskip
    \noindent\textbf{Summary and ETH Lower-Bound.}\hspace{0.25cm}
    It is easy to see that the construction can be performed in polynomial time, uses binary valuations, and all items are goods. We have also shown that $|\items| = \Oh{k^2}$, that is, the reduction is indeed a parameterized reduction. It is well known that \textsc{Multicolored Clique} cannot be solved in time $h(k)\cdot n^{o(k)}$, unless ETH is false~\citep{ChenCFHJKX05}. Therefore, an algorithm running in time $g(|\items|)\cdot n^{o(\sqrt{|\items|})}$ would yield to an algorithm running in time $g(k^2)\cdot n^{o(\sqrt{k^2})} = h(k)\cdot n^{o(k)}$ for \textsc{Multicolored Clique}, which is unlikely. This finishes the proof.
\end{proof}

On the positive side, in the next result, we show that if we relax the notion of fairness a bit, we obtain an efficient algorithm.

\begin{proposition}
    The \FDwE{EF2} problem is fixed-parameter tractable when parameterized by the number of items $|\items|$.    
\end{proposition}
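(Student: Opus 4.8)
The plan is to prove a sharp dichotomy resting on a single observation: \emph{whenever every agent receives at most one item, the allocation is automatically EF2, regardless of the valuations}. To see this, fix any allocation $\pi$ with $|\pi_\ell|\le 1$ for all $\ell\in\agents$ and take an ordered pair $i,j$. If at least one of $\pi_i,\pi_j$ is empty, then $|\pi_i\cup\pi_j|\le 1$, and removing the (at most one) item of $\pi_i\cup\pi_j$ produces an allocation $\lambda$ with $\lambda_i=\lambda_j=\emptyset$, whence $V_i(\lambda)=V_i(\lambdaswap)$. If instead both bundles are singletons, then $|\pi_i\cup\pi_j|=2$ and removing both items again gives $\lambda_i=\lambda_j=\emptyset$ and $V_i(\lambda)=V_i(\lambdaswap)$. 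In either case the removed set (of size at most two) witnesses EF2 for the pair $(i,j)$, so $\pi$ is EF2. Crucially this holds even with chores, i.e.\ regardless of the sign of $V_i(i,a)-V_i(j,a)$, since the budget of two removals is enough to empty both singleton bundles outright.

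First I would use this to dispatch the regime $|\agents|\ge|\items|$: choose $|\items|$ distinct agents and give each exactly one item, leaving all other agents with the empty bundle. Every bundle then has size at most one, so by the observation the allocation is EF2 and the instance is a yes-instance; we answer \emph{yes} in polynomial time.

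It then remains to treat $|\agents|<|\items|$, where the number of agents is now bounded by the parameter. Here I would simply run an exhaustive search over all $|\agents|^{|\items|}<|\items|^{|\items|}$ assignments of items to agents, which is $f(|\items|)$ many, and check EF2 for each. Verification is polynomial: for every ordered pair $(i,j)$ one computes the envy $V_i(\piswap)-V_i(\pi)$ and, for each $a\in\pi_i\cup\pi_j$, the amount by which deleting $a$ decreases this quantity; the pair satisfies EF2 exactly when the two largest such decreases sum to at least the envy. This yields an $f(|\items|)\cdot(|\agents|+|\items|)^{\Oh{1}}$ algorithm overall. Equivalently, in this bounded-agent regime one may invoke the XP-in-$|\items|$ brute force of the preceding proposition, whose running time $|\agents|^{\Oh{|\items|}}$ collapses to a function of $|\items|$ once $|\agents|<|\items|$.

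The whole difficulty is concentrated in the opening observation, and it is precisely the point where EF2 parts ways with EF: under the exact notion an agent holding the empty bundle can irreparably envy a singleton bundle, which is exactly what powers the \Wh reduction above, whereas two removals always suffice to clear the at most two singleton bundles that a pair can involve. The one place I would double-check is the corner cases of the observation---pairs with an empty bundle, and pairs in which a singleton holder envies an empty-bundle agent (the chore case)---to confirm that two removals neutralize the envy irrespective of the sign of $V_i(i,a)-V_i(j,a)$; these checks are routine but are the only conceivable source of error.
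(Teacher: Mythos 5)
Your proof is correct and takes essentially the same approach as the paper: a case split on $|\agents|$ versus $|\items|$, with exhaustive search when the number of agents is bounded by the parameter and a one-item-per-agent allocation otherwise. Your opening observation---that two removals empty both bundles of any pair, making the swap a no-op and hence eliminating envy even in the presence of chores---is just a more explicit justification of the paper's one-sentence argument that such an allocation is EF2.
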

\begin{proof}
    Suppose that $|\agents| \leq |\items|$. Then we can check all possible partitions of items in $|\agents|^\Oh{|\items|}$ time, which is clearly in \FPT. Therefore, let $|\agents| > |\items|$ and $\pi$ be an allocation such that $\forall a_i\in\agents\colon \pi(a_i) = i$. By the definition of the allocation, each bundle is of size at most one. As envy-freeness-up-to-two-items is based on a pairwise comparison of agents, each comparison involves at most $2$ items -- both of them can be removed to eliminate all envy between two agents. Thus, $\pi$ is EF2, finishing the proof.
\end{proof}

We conclude this section with a property that could be of independent interest. Specifically, we show that there are instances where no allocation maximizing Nash social welfare is EFX.
This contrasts the setting of fair division without externalities and additive preferences, as shown by \citet{AmanatidisBFHV2021}.

\begin{proposition}\label{prop:NashSW}
    Let $\ical$ be an instance of Fair Division with Externalities and $\pi^*$ be an allocation maximizing Nash social welfare. Then $\pi^*$ is not necessarily EFX.
\end{proposition}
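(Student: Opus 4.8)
The plan is to construct an explicit counterexample: a small instance of \FDwE{} together with its Nash-social-welfare-maximizing allocation, and then verify directly that this allocation violates the EFX condition of Definition~\ref{def:EFX}. Since the statement is an existence claim (``not necessarily''), a single well-chosen instance suffices, and the main work is choosing the valuations so that the Nash-optimal allocation is forced to be a specific one that we can then show is not EFX.

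\textbf{Constructing the instance.} I would start with the smallest nontrivial setting, likely two or three agents and a handful of items. The key design principle is that Nash social welfare $\prod_{i} V_i(\pi)$ rewards balancing the product of agents' total utilities, which can pull the optimal allocation toward a configuration where one agent strongly envies another in the swapped-bundle sense. Externalities are the crucial lever here: I would set up values $V_i(j,a)$ so that the allocation maximizing the product of the $V_i(\pi)$'s happens to give some agent $i$ a bundle such that $V_i(\pi) < V_i(\pi^{i\leftrightarrow j})$ for some $j$, and moreover so that removing any single envy-reducing item still leaves residual envy. By the normalization in Proposition~\ref{prop:normalized-valuations} I may assume all values are nonnegative, which keeps the Nash product well-defined and positive.

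\textbf{Verification.} Once the instance is fixed, the verification splits into two independent finite checks. First, I would enumerate all $|\agents|^{|\items|}$ allocations (a constant number, since the instance is small), compute the Nash social welfare of each, and confirm that the intended allocation $\pi^*$ is the unique maximizer (or at least that \emph{every} maximizer fails EFX, which is what the statement literally requires). Second, for the maximizer(s), I would exhibit a specific pair $(i,j)$ with $V_i(\pi^*) < V_i((\pi^*)^{i\leftrightarrow j})$ and a specific item $a$ satisfying Properties~1 and~2 of Definition~\ref{def:EFX} for which $V_i(\lambda) < V_i(\lambda^{i\leftrightarrow j})$ still holds, witnessing the EFX violation.

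\textbf{Main obstacle.} The delicate part is the simultaneous satisfaction of two competing requirements: the valuations must make $\pi^*$ genuinely Nash-optimal (which favors balanced, ``fair-looking'' allocations), while that same $\pi^*$ must be non-EFX (which requires unremovable envy). These forces push in opposite directions, so the numerical tuning of the $V_i(j,a)$ values is the heart of the proof. I expect to need the externality terms to create envy that is ``spread across'' enough items that no single removal eliminates it, yet structured so that redistributing items would strictly lower the Nash product. If a two-agent instance proves too rigid to reconcile these constraints, I would move to three agents, where the extra degree of freedom in the product makes it easier to pin down a unique maximizer that is simultaneously non-EFX.
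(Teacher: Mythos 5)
There is a genuine gap: your proposal is a plan for finding a counterexample, but it never produces one. For a proposition of this form (``not necessarily EFX''), the entire mathematical content \emph{is} the explicit instance together with the verification; everything else is routine. You correctly identify the shape of what is needed --- an instance whose Nash maximizer(s) all fail Definition~\ref{def:EFX}, verified by finite enumeration --- and you are even careful about the multiplicity-of-maximizers issue. But you explicitly defer the construction itself (``the numerical tuning of the $V_i(j,a)$ values is the heart of the proof''), which means the proof is missing precisely its heart. A referee could not check anything here, because there is nothing concrete to check.

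Moreover, the tension you flag as the main obstacle --- that Nash welfare ``favors balanced, fair-looking allocations'' while non-EFX requires unremovable envy --- is exactly what externalities dissolve, and seeing this is the one idea you needed. The paper's instance has $4$ identical items and $|\agents|\geq 3$ agents with binary valuations: agent $1$ values an item only if it is allocated to herself, while every agent $i \geq 2$ values an item if it goes to $i$ or to agent $2$. Because agents $3,\ldots,|\agents|$ free-ride on agent $2$'s bundle, the Nash product
\[
|\pi_1|\cdot|\pi_2|\cdot\prod_{i\geq 3}\bigl(|\pi_2|+|\pi_i|\bigr)
\]
is maximized by the very unbalanced allocation $|\pi_1|=1$, $|\pi_2|=3$; the product rewards piling items onto agent $2$, not spreading them out. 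Agent $1$ then envies agent $2$, and since the items are indistinguishable, removing any single item from $\pi_2$ leaves $|\pi_2|-1 = 2 > 1 = |\pi_1|$, so the envy survives every removal --- no ``spreading envy across many items'' is needed. Your intuition that the two requirements push in opposite directions is what kept you from writing down an instance; with externalities they push in the same direction, and a three-line construction suffices.
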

\begin{proof}
    Let $\ical$ be an instance with $\items=\{a_1,a_2,a_3,a_4\}$, $|\agents|\geq 3$, the valuation function for agent $i\in\{2,\ldots,|N|\}$ and every item $a\in \items$ be defined as
    \[
        V_i(j,a) = \begin{cases}
            1 & \text{if } i = j \text{ or } j = 2,\\
            0 & \text{otherwise,}
        \end{cases}
    \]
    and for every item $a\in \items$ let $V_1(1,a) = 1$ and $0$ otherwise.
    
    First, observe that the items are, from the perspective of the agents, indistinguishable, and the only thing that matters for utilities is the number of items allocated to each agent. Therefore, in the rest of this proof, we assume two allocations with the same number of items allocated to the same agents as equivalent. Based on this, we can easily compute the Nash social welfare for an allocation $\pi$ using the function
    \[
        \operatorname{NW}(\pi) = |\pi_1| \cdot |\pi_2| \cdot \prod_{i\in\{3,\ldots,|N|\}} (|\pi_2| + |\pi_i|),
    \]
    subject to the two constraints: $\sum_{i\in N} \pi_i = 4$ and ${\forall i\in N\colon \pi_i \in \{0,1,2,3,4\}}$. This clearly attains its global maximum in $|\pi_1| = 1$ and $|\pi_2| = 3$. It can be shown either analytically or by a simple argumentation: there has to be at least one item allocated to agent $1$ and at least one item allocated to agent $2$. Moreover, agents $3\ldots N$ do not care whether an item is allocated directly to them or to the agent~$2$, while agent $2$ only benefits from items allocated to him. Therefore, the only allocation $\pi^*$ maximizing Nash social welfare gives one item to agent $1$ and three items to agent $2$.
    
    On the other hand, this allocation is clearly not EFX as the agent $1$ benefits from swapping bundle with agent $2$ even if we remove any item from $\pi^*_2$ -- as stated before, the items are anyway indistinguishable.
\end{proof}

It deserves to highlight that the proof of \Cref{prop:NashSW} uses an instance with only $3$ agents, $4$ items which are all of the same item-type, and binary valuations.

\section{Binary Valuations}
\label{sec:binary}
In this section, we study the special case of $\FDwE{}$ when all the valuations are binary, i.e., all $V_i$ have domain $\{0,1\}$. We would like to stress here that, in the setting without externalities, the domain for binary valuations is usually $\{-1,0,1\}$; $\{0,1\}$ for goods and $\{-1,0\}$ for chores~\citep{AzizLRS2023}. However, in the presence of externalities, there can exist chores even without negative values, so \citet{efx-externalities} defined binary valuations only using the domain $\{0,1\}$.

In our first result, we show that, in fact, binary valuations allow us to capture any scenario where every agent has at most two different values.

\begin{proposition}
    Let $\ical=(N,\items,V)$ be an instance of $\FDwE{\phi}$, where $\phi \in \{ \text{EF, EF1, EFX}\}$. Assume that for every agent $i$, there exist two numbers $x_i$ and $y_i$ such that $V_i(j,a) \in \{x_i,y_i\}$ for every agent $j$ and item $a$. Then $\ical$ can be transformed in linear time in the equivalent instance of $\FDwE{\phi}$ with the same sets of agents and objects and binary valuations.
\end{proposition}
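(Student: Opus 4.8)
The plan is to reduce to binary valuations by applying, separately for each agent~$i$, a positive affine transformation of its valuation that sends the two values $x_i,y_i$ to $\{0,1\}$. Concretely, assume without loss of generality that $x_i < y_i$ (the degenerate case $x_i = y_i$ is handled below) and define the new valuation $V'_i(j,a) = (V_i(j,a)-x_i)/(y_i-x_i)$ for all $j\in\agents$ and $a\in\items$. Since $V_i(j,a)\in\{x_i,y_i\}$, we have $V'_i(j,a)\in\{0,1\}$, so the resulting instance $\ical'$ is binary, and the transformation is clearly computable in linear time. It then remains to argue that $\ical$ and $\ical'$ admit exactly the same EF, EF1, and EFX allocations.

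The key observation is that all three fairness notions are invariant under such per-agent positive affine transformations, because every condition in their definitions compares the value $V_i$ on two allocations with the \emph{same} number of items. Writing $m=|\items|$, note that for any allocation $\sigma$ of a set of $p$ items we have $V'_i(\sigma)=\sum_{a}(V_i(\sigma(a),a)-x_i)/(y_i-x_i) = (V_i(\sigma)-p\,x_i)/(y_i-x_i)$. Hence, for two allocations $\sigma,\sigma'$ of the same $p$ items,
\[
    V'_i(\sigma)-V'_i(\sigma') = \frac{V_i(\sigma)-V_i(\sigma')}{y_i-x_i},
\]
so the additive shift cancels and, since $y_i-x_i>0$, both the sign and all magnitude comparisons of such differences are preserved in both directions. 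Applying this with $(\sigma,\sigma')=(\pi,\piswap)$ (both having $m$ items) shows $V'_i(\pi)\ge V'_i(\piswap)\iff V_i(\pi)\ge V_i(\piswap)$, giving invariance of EF; applying it with $(\lambda,\lambdaswap)$ (both having $m-1$ items) handles the removal-based inequalities of EF1 and EFX.

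The step requiring the most care is Property~2 of \Cref{def:EFX}, since it relates the $(m-1)$-item difference $V_i(\lambda)-V_i(\lambdaswap)$ to the $m$-item difference $V_i(\pi)-V_i(\piswap)$. By the displayed identity each of these two differences is scaled by the \emph{same} positive factor $1/(y_i-x_i)$ under the transformation, so the strict inequality of Property~2, together with the conclusion $V_i(\lambda)\ge V_i(\lambdaswap)$, is preserved in both directions; consequently $\pi$ violates EFX for a pair $(i,j)$ and an item $a$ in $\ical$ exactly when it does so in $\ical'$. Finally, for any agent with $x_i=y_i$ all of its values are equal, so $V_i(\sigma)$ is constant over allocations of a fixed item count; such an agent never envies anyone in either instance, and we may simply set all of its new values to $0$ without affecting any fairness condition. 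Combining the cases yields the claimed equivalence. (Alternatively, the additive shift can be absorbed by invoking \Cref{prop:normalized-valuations}, leaving only the positive scaling to verify; I expect the direct affine argument above to be the cleanest route.)
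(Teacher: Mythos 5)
Your proof is correct and takes essentially the same route as the paper's: your affine map $V_i(j,a)\mapsto (V_i(j,a)-x_i)/(y_i-x_i)$ is exactly the paper's transformation sending $x_i$ to $0$ and $y_i$ to $1$, and your scaling identity for differences over a common item set is precisely the paper's key identity $V_i(\piswap)-V_i(\pi)=(y_i-x_i)\bigl(V'_i(\piswap)-V'_i(\pi)\bigr)$, which the paper derives by counting envy-decreasing and envy-increasing items rather than by cancelling the additive shift directly. Both arguments then conclude identically for EF, EF1, and EFX.
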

\begin{proof}
    Without loss of generality, we assume that for every agent $i$, $x_i < y_i$. For each pair of agents $i$ and $j$, we denote by $\items_{i\to j}$ the subset of items from which the agent $i$ receives a strictly bigger value once they are allocated to $j$ instead of~$i$, i.e., $\items_{i\to j}=\{a\in \items: V_i(i,a)=x_i < y_i= V_i(j,a)\}$. Similarly, let $\items_{i\to j}$ be the subset of items from which $i$ receives a strictly bigger value once they are allocated to $i$ itself: $\items_{i\leftarrow j}=\{a\in \items: V_i(i,a)=y_i > x_i= V_i(j,a)\}$. We construct the new valuations $V_i'$ of $i$ as follows:
    \[
        V'_i(j,a) = \begin{cases}
            0 & \text{if  } \:V_i(j,a)=x_i,\\
            1 & \text{if  } \:V_i(j,a)=y_i,
        \end{cases}
    \] 
    and define the new instance $\ical'=(N,\items,V')$. To see that~$\ical$ and $\ical'$ are equivalent, consider arbitrary allocation $\pi$ of items in $\items$ to agents in $N$ and fix two agents $i$ and $j$. The difference between values of $i$ after and before the swap with $j$ in $\ical$ is $V_i(\piswap)-V_i(\pi)=$
    \begin{multline*}
   \sum_{a\in \pi_j} (V_i(i,a)-V_i(j,a)) + \sum_{a\in \pi_i} (V_i(j,a)-V_i(i,a))=\\ 
    \sum_{a \in \pi_j \cap \items_{i \leftarrow j}} (y_i-x_i) + 
    \sum_{a \in\pi_j \cap \items_{i \to j}} (x_i-y_i)+\\ 
    \sum_{a\in \pi_i\cap \items_{i \leftarrow j}} (x_i-y_i)+
    \sum_{a\in \pi_i\cap \items_{i \to j}} (y_i-x_i)
    =(y_i-x_i)(m_1 -m_2)
    \end{multline*}
    where $m_1$ and $m_2$ are precisely the numbers of items removal of which decreases (increases correspondingly) the envy of $i$ towards $j$. In other words:
    $$m_1=|(\pi_j \cap \items_{i \shortleftarrow j})\cup (\pi_i \cap \items_{i \shortrightarrow j})|$$
    $$m_2=|(\pi_i \cap \items_{i \shortleftarrow j})\cup (\pi_j \cap \items_{i \shortrightarrow j})|$$
    Analagously, $V'_i(\piswap)-V'_i(\pi)=m_1-m_2$, therefore
    $V_i(\piswap)-V_i(\pi)=(y_i-x_i)(V'_i(\piswap)-V'_i(\pi))$
    and similarly
    for any allocation $\lambda$ obtained from $\pi$ by removing some item. Hence, $\pi$ is EF (EF1) with respect to the valuation~$V'$ if and only if it is EF (EF1) with respect to the valuation~$V$. For the EFX equivalence, observe that removal of any item~$a$ decreases the envy of $i$ towards $j$ in $\ical$ if and only if it decreases the envy in $\ical'$: in this case, the decreases are precisely $y_i-x_i$ and 1 correspondingly.
\end{proof}

Next, we show that the notions of EFX and EF1 allocations coincide for binary valuations.

\begin{proposition}
     \label{bin_EF1_EFX}
     Let $\ical$ be an instance of Fair Division with Externalities with binary valuations and let~$\pi$ be some allocation of items. Then $\pi$ is EFX if and only if $\pi$ is EF1.
\end{proposition}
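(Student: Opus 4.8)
The plan is to prove both directions, noting that the EFX $\Rightarrow$ EF1 direction is immediate from the definitions and holds for arbitrary valuations, so the real content is the converse. First I would observe that with binary valuations, for a fixed ordered pair $(i,j)$, removing any single item can change the quantity $V_i(\lambdaswap)-V_i(\lambda)$ relative to $V_i(\piswap)-V_i(\pi)$ by at most one unit: since all values lie in $\{0,1\}$, each item contributes a value in $\{-1,0,+1\}$ to the difference $V_i(\piswap)-V_i(\pi)$, and removing one item simply deletes one such contribution. This is the crucial structural fact that collapses EF1 and EFX.

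For the easy direction (EFX $\Rightarrow$ EF1), I would fix a pair $i,j$ with $V_i(\pi)<V_i(\piswap)$ (if there is no envy the pair trivially satisfies EF1). EF1 only asks for the existence of \emph{some} item whose removal eliminates the preference for the swapped allocation. Pick any item $a$ whose removal strictly decreases the envy — such an item exists because some item contributes positively to $V_i(\piswap)-V_i(\pi)>0$. Then $a$ satisfies Property 2 of Definition~\ref{def:EFX}, so by the EFX hypothesis the resulting $\lambda$ satisfies $V_i(\lambda)\ge V_i(\lambdaswap)$, which is exactly the EF1 witness.

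For the main direction (EF1 $\Rightarrow$ EFX), I would again fix a pair $i,j$ with envy and let $a$ be an arbitrary item satisfying Property 2, i.e. $V_i(\lambda)-V_i(\lambdaswap) > V_i(\pi)-V_i(\piswap)$. The EFX conclusion I must reach is $V_i(\lambda)\ge V_i(\lambdaswap)$. The key point is that by the binary structure, the strict inequality in Property 2 means removing $a$ increases the difference $V_i(\cdot)-V_i(\cdot^{i\leftrightarrow j})$ by exactly $1$ (it cannot increase by more, and it is strictly positive, hence exactly one in the integer-valued setting). Therefore \emph{any} item that strictly decreases the envy decreases it by the \emph{same} amount as the single item witnessing EF1. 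Consequently, if the EF1 witness eliminates the envy, so does every Property-2 item $a$, giving $V_i(\lambda)\ge V_i(\lambdaswap)$ and establishing EFX.

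The main obstacle, and the step I would write most carefully, is making the ``changes by at most/exactly one unit'' argument rigorous. I would argue that with domain $\{0,1\}$ the per-item contribution $V_i(i,a)-V_i(j,a)$ to $V_i(\piswap)-V_i(\pi)$ (appropriately signed depending on whether $a\in\pi_i$ or $a\in\pi_j$) always lies in $\{-1,0,1\}$, so removing a single item shifts the total difference by at most one. Combined with integrality of all the quantities $V_i(\pi),V_i(\piswap)$, this forces that a strict decrease in envy is exactly a decrease by one, which is precisely what makes the EF1 witness and every EFX-candidate item behave identically; this uniformity is what lets the single existential guarantee of EF1 upgrade to the universal guarantee of EFX.
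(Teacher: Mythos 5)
Your proposal is correct and takes essentially the same route as the paper's proof: with binary valuations every single-item removal changes the envy $V_i(\piswap)-V_i(\pi)$ by at most one, so the EF1 witness forces the envy to be exactly one, and then by integrality every removal satisfying Property 2 of Definition~\ref{def:EFX} eliminates the envy. The paper's argument is just a more compressed version of this same "per-item change is at most one" observation.
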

\begin{proof}
    Observe that every EFX allocation is EF1 by definition, even for general valuations. For another direction, assume that $\pi$ is EF1 allocation. If agent $i$ envies $j$, there must be some item $a$ such that removal $a$ from $\pi$ eliminates the envy. Since the valuations are binary, removal of any item can decrease the envy by at most one, from which we conclude that $V_i(\pi) - V_i(\piswap) =1$. But then the removal of any other item, decreasing $V_i(\pi) - V_i(\piswap)$, eliminates the envy as well. Hence, $\pi$ is EFX allocation.
\end{proof}

\citet{AzizLRS2023} showed that for instances with three-agents, no chores, and binary valuations, an EF1 allocation always exists and can be found in polynomial time. Therefore, by \Cref{bin_EF1_EFX}, we obtain the same guarantee also for~EFX.

\begin{theorem}
    Every instance of Fair Division with Externalities with three agents, binary valuations, and no weak-chores, admits an EFX allocation which can be computed in polynomial time.
\end{theorem}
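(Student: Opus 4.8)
The plan is to obtain the theorem as an immediate composition of an existence result from the literature with the equivalence established in Proposition~\ref{bin_EF1_EFX}. The first step is to produce, in polynomial time, an EF1 allocation for the given instance. For this I would invoke the result of \citet{AzizLRS2023}: in the regime of three agents, binary valuations, and no chores, an EF1 allocation is guaranteed to exist and can be computed efficiently. Since every strong-chore is a weak-chore, the assumption that there are \emph{no weak-chores} is at least as strong as ``no chores,'' so the hypotheses of that result are met and we obtain an EF1 allocation $\pi$ in polynomial time.

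The second step is to observe that this same $\pi$ is already EFX, requiring no additional work. This is exactly the content of Proposition~\ref{bin_EF1_EFX}, which asserts that for binary valuations EFX and EF1 coincide \emph{as properties of a fixed allocation}, not merely as existence statements. Applying it to $\pi$ upgrades ``$\pi$ is EF1'' to ``$\pi$ is EFX'' for free, and since $\pi$ was computed in polynomial time, the full theorem --- existence together with polynomial-time computability of an EFX allocation --- follows at once.

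The only thing that genuinely needs verification, and hence the main (mild) obstacle, is aligning the hypotheses across the two ingredients: one must confirm that the binary-valuation convention and the chore-freeness assumption under which \citet{AzizLRS2023} establish EF1 existence really do cover our instances, in particular that their model accommodates the additive externalities present here. Once this matching is checked, there is no further computational content beyond running the cited EF1 algorithm, and the passage from EF1 to EFX is purely definitional via Proposition~\ref{bin_EF1_EFX}.
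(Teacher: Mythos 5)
Your proposal is correct and is essentially identical to the paper's own (very short) argument: the paper likewise obtains the theorem by combining the result of \citet{AzizLRS2023} --- EF1 allocations exist and are polynomial-time computable for three agents, binary valuations, and no chores --- with Proposition~\ref{bin_EF1_EFX}, which says that for binary valuations EF1 and EFX coincide as properties of any fixed allocation. Your remark about checking that the hypotheses of the cited result (including its treatment of externalities and chores) match the theorem's assumptions is exactly the only verification the paper's proof implicitly relies on.
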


\section{Correlated Valuations}\label{sec:correlated}
One of the special cases of valuations with externalities are so-called \emph{agent-correlated} valuations, where an agent $i\in N$ receives for an item $a\in \items$:
\begin{itemize}
    \item the best value $v_{i,a}$ if $a$ is allocated to $i$ and
    \item some part $(1-\tau_{i,j})v_{i,a}$ of the best value, $\tau_{i,j}>0$, if $a$ is allocated to another agent $j$. 
\end{itemize} 
One can imagine that the coefficient $\tau_{i,j}$ indicates the degree of friendship between $i$ and $j$. %
We extend this model by adding item-correlations, represented by coefficients $\mu_{i,a}$ for each agent $i$ and item $a$, so the valuations have the following form:
$V=\{V_i(j,a):i,j\in N, a\in \items\}$ such that for every pair of agents $i,j\in N$ and item $a\in \items$,
\begin{equation*}
V_i(j,a)=\begin{cases}
      v_{i,a} & \text{ if $i=j$},\\
      (1-\tau_{i,j}\mu_{i,a})\cdot v_{i,a} & \text{ 
 otherwise,} 
    \end{cases}
\end{equation*}
for some $v_{i,a}$ and $\mu_{i,a}$, where $\tau_{i,j}>0$. 
We call such valuations \textit{agent-item-correlated}. Intuitively, $\mu_{i,a}$ shows how important it is for the agent $i$ that they and no one else receives the item $a$. Surprisingly, it turns out that agent-correlated valuations and even their item-correlated generalizations can be reduced to the valuations without externalities and, therefore, we can use classic algorithms and guarantees from the theory of fair division without externalities~\citep{LiptonMMS04,CaragiannisKMPSW19,AzizCIW2021}. In particular, we get that in these settings EF1 allocations always exist and can be found in polynomial time. 

\begin{theorem}
    Let $\phi \in \{\text{EF, EF1, EFX}\}$. Any instance $\mathcal I$ of $\FDwE{\phi}$ with agent-item-correlated valuations can be transformed in linear time into the equivalent instance $\mathcal I'$ of $\FD{\phi}$ (with no externalities) with the same sets of items and agents.
\end{theorem}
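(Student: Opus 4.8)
The plan is to construct the externality-free instance by defining, for each agent $i$ and item $a$, a new valuation $v'_{i,a}$ that captures exactly how much more agent $i$ values having $a$ herself versus having it allocated elsewhere. The key observation is that with agent-item-correlated valuations, the difference $V_i(i,a) - V_i(j,a) = v_{i,a} - (1-\tau_{i,j}\mu_{i,a})v_{i,a} = \tau_{i,j}\mu_{i,a}v_{i,a}$ depends on $j$ only through the single positive scalar $\tau_{i,j}$. This is the structural feature that makes the reduction possible: the ``gain'' from pulling item $a$ toward $i$ and away from $j$ factors as $\tau_{i,j} \cdot (\mu_{i,a} v_{i,a})$, separating the agent-pair dependence from the item dependence.

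First I would set $v'_{i,a} := \mu_{i,a} v_{i,a}$ (possibly after using Proposition~\ref{prop:normalized-valuations} to assume nonnegativity, or after a sign discussion) as the externality-free value that agent $i$ assigns to item $a$, and define $\mathcal{I}'$ to be the standard \FD{\phi} instance with these valuations. The central computation is then to compare, for an arbitrary allocation $\pi$ and a fixed pair $i,j$, the swap-difference in $\mathcal{I}$ against the swap-difference in $\mathcal{I}'$. In the externality setting, $V_i(\piswap) - V_i(\pi) = \sum_{a\in\pi_j}(V_i(i,a)-V_i(j,a)) + \sum_{a\in\pi_i}(V_i(j,a)-V_i(i,a))$, and substituting the formula above shows each term carries a factor $\tau_{i,j}$. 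In the no-externality setting, the analogous envy quantity (value of $j$'s bundle minus value of $i$'s bundle to agent $i$) is $\sum_{a\in\pi_j} v'_{i,a} - \sum_{a\in\pi_i} v'_{i,a}$. The claim to verify is that these two quantities agree up to the positive multiplicative constant $\tau_{i,j}$, exactly as in the two-valued reduction earlier in the paper.

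Once this scaling identity is established, the equivalence of all three fairness notions follows in the same manner as the two-valued proposition. Because $\tau_{i,j}>0$, the sign of the swap-difference is preserved, so envy in $\mathcal{I}$ occurs if and only if envy in $\mathcal{I}'$ occurs, giving the EF equivalence. For EF1, I would note that removing an item shifts both swap-differences, and since they are related by the fixed positive factor $\tau_{i,j}$, one is made nonpositive by a removal if and only if the other is; this handles EF1. For EFX, the key point is that removing an item $a$ strictly decreases the envy of $i$ towards $j$ in $\mathcal{I}$ if and only if it does so in $\mathcal{I}'$ (the per-item contribution $\tau_{i,j} v'_{i,a}$ is zero exactly when $v'_{i,a}$ is), so the set of envy-reducing items coincides and the EFX condition transfers.

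**The main obstacle** I anticipate is the bookkeeping around signs and the externality structure when $v_{i,a}$ or $\mu_{i,a}$ may be negative, since the clean factorization $V_i(i,a)-V_i(j,a)=\tau_{i,j}\mu_{i,a}v_{i,a}$ must be shown to correctly encode goods versus chores in $\mathcal{I}'$ via the single combined value $v'_{i,a}=\mu_{i,a}v_{i,a}$. I would handle this by carrying the factorization through carefully and invoking Proposition~\ref{prop:normalized-valuations} to normalize if needed, but the delicate verification is confirming that the per-pair constant $\tau_{i,j}$ does not depend on the item and the per-item value $v'_{i,a}$ does not depend on $j$, which is precisely what the agent-item-correlated form guarantees and what makes a \emph{single} externality-free instance (rather than a pair-dependent one) suffice.
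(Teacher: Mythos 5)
Your proposal is correct and follows essentially the same route as the paper: define the externality-free valuations as $U_i(a)=\mu_{i,a}v_{i,a}$, establish the scaling identity $V_i(\piswap)-V_i(\pi)=\tau_{i,j}\bigl(U_i(\piswap)-U_i(\pi)\bigr)$ (and its analogue after removing an item), and conclude equivalence for EF, EF1, and EFX from $\tau_{i,j}>0$. Your worry about signs and normalization is unnecessary---the identity holds regardless of the signs of $v_{i,a}$ and $\mu_{i,a}$, and the paper does not invoke any normalization.
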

\begin{proof}
    Let $\mathcal I=(\agents, \items, V)$, we construct the instance $\mathcal{I'}=(\agents, \items, U)$ of $\FD{\phi}$ with $U_i(a)=\mu_{i,a} v_{i,a}$ for every agent $i\in N$ and every item $a \in A$. To see the equivalence of $\mathcal I$ and $\mathcal I'$, fix arbitrary allocation of items $\pi$. For every pair of agents $i$ and $j$, we have $V_i(\piswap)-V_i(\pi)=
        \sum_{a\in \pi_j} (V_i(i,a)-V_i(j,a)) + \sum_{a\in \pi_i} (V_i(j,a)-V_i(i,a))=\sum_{a\in \pi_j} \tau_{i,j}\mu_{i,a}v_{i,a} - \sum_{a\in \pi_i} \tau_{i,j}\mu_{i,a}v_{i,a}$, or in terms of $U$ it is $V_i(\piswap)-V_i(\pi)=
        \tau_{i,j} (U_i(\piswap)-U_i(\pi)).$
    Since $\tau_{i,j}>0$, $i$ envies $j$ in $\mathcal I$ if and only if $i$ envies~$j$ in $\mathcal I'$. In particular, the allocation $\pi$ is envy-free for $\mathcal I'$ if and only if it is envy-free for $\mathcal I$. The same holds for any allocation obtained from $\pi$ by removing one item. Therefore, $\mathcal I$ and $\mathcal I'$ are equivalent as instances of EF or EF1.
    
    For EFX, it remains to show that whenever $i$ envies $j$, the removal of any item from $\pi$ decreases the envy in $\mathcal{I}$ if and only if it decreases the envy in $\mathcal{I'}$. Let $\lambda$ be the allocation obtained from $\pi$ by removing some item $a_0$ that decreases envy of $i$ towards $j$ in $\mathcal I$, then we have:
    \begin{multline*}
        V_i(\lambdaswap)-V_i(\lambda)< V_i(\piswap)-V_i(\pi)\iff \\\tau_{i,j}(U_i(\lambdaswap)-U_i(\lambda))< \tau_{i,j}(U_i(\piswap)-U_i(\pi))\\ \iff U_i(\lambdaswap)-U_i(\lambda)< \tau_{i,j}U_i(\piswap)-U_i(\pi),
    \end{multline*}
    since $\tau_{i,j}>0$, which concludes the proof.                        
\end{proof}

\subsection{Examples of agent-item-correlated valuations}

For one real-life scenario, imagine that our agents are partitioned into~$t$ teams $T_1,\ldots,T_t$. Let $c<1$ be some constant. The valuation of every agent is defined as follows: for an item $a\in \items$, $V_i(i,a)=v_{i,a}$. Moreover, $V_i(j,a)=c\cdot v_{i,a}$ if $j\ne i$ is a teammate of $i$ and $V_i(j,a)=0$ otherwise. We call such valuations \emph{team-based}. 

Therefore, in team-based valuations, agents always want to receive items themselves, but otherwise they prefer items to be given to their teammates. This situation can be captured by setting $\mu_{i,a}=1$ along with $\tau_{i,j} = 1-c$ if $i$ and $j$ are in the same team and $\tau_{i,j} = 1$ otherwise.

\begin{corollary}
    For any instance $\ical$ of Fair Division with Externalities with team-based valuations, an EF1 allocation always exists and can be found in polynomial time.
\end{corollary}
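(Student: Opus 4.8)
The plan is to recognize that team-based valuations are merely a special case of the agent-item-correlated valuations handled by the preceding theorem, and then to compose that reduction with the classical existence-and-computability result for EF1 in the externality-free setting. So the whole argument is a two-step chaining of facts already available in the excerpt.

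First I would verify that the encoding indicated just above the corollary really exhibits team-based valuations as agent-item-correlated ones. Concretely, set $\mu_{i,a}=1$ for every agent $i$ and item $a$, put $\tau_{i,j}=1-c$ whenever $i$ and $j$ are teammates, and $\tau_{i,j}=1$ otherwise. One then checks that $(1-\tau_{i,j}\mu_{i,a})\cdot v_{i,a}$ equals $c\cdot v_{i,a}$ for teammates and $0$ for non-teammates, matching the prescribed values of $V_i(j,a)$ for $i\neq j$, while $V_i(i,a)=v_{i,a}$ by construction. The single point that must be confirmed here — and the only place the hypothesis $c<1$ is invoked — is the strict positivity requirement $\tau_{i,j}>0$ of agent-item-correlated valuations: indeed $1-c>0$ precisely because $c<1$, and $1>0$ trivially.

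With membership in the class established, I would apply the preceding theorem, instantiated at $\phi=\text{EF1}$, to transform $\ical$ in linear time into an equivalent instance $\ical'$ of \FD{EF1} (without externalities) on the same sets of agents and items. By the equivalence the theorem guarantees, an allocation is EF1 for $\ical$ if and only if it is EF1 for $\ical'$. It then remains only to invoke the classical fact that in fair division without externalities an EF1 allocation always exists and can be computed in polynomial time, for instance via round-robin or the envy-cycle elimination procedure of \citet{LiptonMMS04}; transporting such an allocation back through the equivalence yields an EF1 allocation for the original team-based instance, and does so in polynomial time since the transformation is linear.

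I do not expect any genuinely hard step: the result is a direct corollary of the reduction theorem together with a standard existence result. The only subtlety worth flagging is the bookkeeping needed to certify that team-based valuations really lie inside the agent-item-correlated class — especially the strict positivity of all coefficients $\tau_{i,j}$, which depends essentially on the assumption $c<1$ — since the reduction theorem's guarantees are stated only under that positivity condition.
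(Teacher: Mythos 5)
Your proposal is correct and follows exactly the paper's (implicit) argument: encode team-based valuations as agent-item-correlated ones via $\mu_{i,a}=1$, $\tau_{i,j}=1-c$ for teammates and $\tau_{i,j}=1$ otherwise (where $c<1$ guarantees the required strict positivity $\tau_{i,j}>0$), then apply the preceding reduction theorem and the classical EF1 existence result for fair division without externalities. No gaps; this matches the paper's route step for step.
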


For another example of agent-item-correlated valuations, assume that the agents form a graph $G=(N,E)$, which we will call a network. For each item $a$ and agent $i$, if $i$ receives $a$, this also contributes to the values of any other agent $j$, and the contribution depends on the distance $d_{i,j}$ between $i$ and $j$ in the network $G$, namely, $V_i(j,a)=(1-d_{i,j}\mu_{i,a})\cdot v_{i,a}$. We call such valuations \emph{network-based}.

\begin{corollary}
    For any instance $\ical$ of Fair Division with Externalities with network-based valuations, an EF1 allocation always exists and can be found in polynomial time.
\end{corollary}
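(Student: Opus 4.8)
The plan is to show that network-based valuations are a special case of agent-item-correlated valuations, and then invoke the preceding theorem together with the classical guarantee that EF1 allocations for instances without externalities always exist and can be computed in polynomial time.

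First I would recall the definition just given: in a network-based instance the agents sit on the vertices of a graph $G=(N,E)$, and the value agent $i$ assigns to item $a$ when it is held by $j$ is $V_i(j,a)=(1-d_{i,j}\mu_{i,a})\cdot v_{i,a}$, where $d_{i,j}$ is the graph distance between $i$ and $j$. The key observation is that this is syntactically of the agent-item-correlated form $V_i(j,a)=(1-\tau_{i,j}\mu_{i,a})\cdot v_{i,a}$: one simply sets $\tau_{i,j}:=d_{i,j}$ for $j\neq i$, while the self-value $V_i(i,a)=v_{i,a}$ matches because $d_{i,i}=0$ gives $(1-0\cdot\mu_{i,a})v_{i,a}=v_{i,a}$, consistent with the $i=j$ branch.

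Next I would verify the one nontrivial hypothesis of the agent-item-correlated model, namely $\tau_{i,j}>0$. For distinct agents $i$ and $j$ that lie in the same connected component of $G$, the distance $d_{i,j}$ is a positive integer, so $\tau_{i,j}=d_{i,j}>0$ as required. The only subtlety is the degenerate case where $i$ and $j$ are in different components, so that $d_{i,j}=\infty$; I would handle this either by assuming $G$ is connected (the natural reading of a ``network''), or by noting that one may equivalently cap the distance at any value strictly larger than the diameter, or simply treat a disconnected network as the disjoint union of its components, so that the positivity $\tau_{i,j}>0$ holds on every relevant pair. This is the step I expect to need the most care, since the previous theorem's hypothesis $\tau_{i,j}>0$ must genuinely hold.

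With the reduction in place, the previous theorem transforms any network-based instance $\ical$ of $\FDwE{\text{EF1}}$ in linear time into an equivalent instance $\ical'$ of $\FD{\text{EF1}}$ with no externalities, on the same sets of agents and items. Since EF1 allocations are guaranteed to exist and can be computed in polynomial time in the externality-free setting~\citep{LiptonMMS04}, the equivalence immediately yields an EF1 allocation for $\ical$ in polynomial time, completing the proof.
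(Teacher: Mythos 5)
Your proposal is correct and matches the paper's intended argument: the paper leaves this corollary proofless precisely because network-based valuations are agent-item-correlated valuations with $\tau_{i,j}=d_{i,j}$, so the preceding reduction theorem plus the classical EF1 guarantees without externalities immediately give the result. Your extra care about disconnected networks (where $d_{i,j}$ would be undefined or infinite) is a reasonable point the paper glosses over, but it does not change the route of the argument.
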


For instance, network-based valuations would capture the following scenario: a fixed amount of new transport stops (which will be items) should be added, and there are few possible locations (corresponding to agents) for them. The goal is to ensure that from every location one can reach stops that are not too far. To bring $\mu_{i,a}$ into play, imagine that some stops are more important to have nearby then the others. For instance, it is not crucial to have a railway connection in the vicinity, but highly recommended to ensure that there are underground stations close enough.

\section{Conclusions}\label{sec:conclusions}

In this work, we continue the line of research on fair division of indivisible items with externalities initiated by \citet{efx-externalities}. In contrast to previous work, we study the problem from the perspective of computational complexity. To this end, we provide strong intractability results for various restrictions of the problem. On the other hand, we provide several fixed-parameter algorithms that, together with previously mentioned hardness, paint a complete complexity picture of fair division with externalities with respect to its natural parameters. Later, we additionally focus on restricted valuations, providing many properties that lead to previously unknown existence guarantees. 

Our algorithmic results leave open a very intriguing question.
What is the complexity of deciding whether an EF allocation exists when $|\agents|>|\items|$ and we parameterize by the number of items? In the absence of externalities the answer to this is trivial; there is no EF allocation! With externalities though, there is a very easy \XP algorithm, but for fixed-parameter tractability the problem becomes very thought provoking.

It is very common in the fair division literature to combine \emph{fairness} of the outcome with its \emph{efficiency}. Arguably, the most widely studied efficiency notions in the context of fair division are Pareto optimality (PO) and social welfare. 
In our last result, we show that these two notions on their own are easy to compute even under the presence of externalities. Therefore, it is natural to ask for a complexity picture of different combinations of fairness and efficiency.

\begin{proposition}
    Given an instance $\mathcal{I}$ of Fair Division with Externalities, there is a polynomial-time algorithm that finds an allocation which is Pareto optimal and maximizes utilitarian social welfare.
\end{proposition}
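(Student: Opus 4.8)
The plan is to exploit the additive structure of the valuations to reduce utilitarian-welfare maximization to an independent, item-by-item greedy choice, and then to invoke the standard fact that any welfare-maximizing allocation is automatically Pareto optimal.

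First I would observe that utilitarian social welfare decomposes over the items. Writing $\operatorname{SW}(\pi)=\sum_{i\in\agents}V_i(\pi)$ and using $V_i(\pi)=\sum_{a\in\items}V_i(\pi(a),a)$, we may exchange the order of summation to obtain
\[
\operatorname{SW}(\pi)=\sum_{a\in\items}\Big(\sum_{i\in\agents}V_i(\pi(a),a)\Big).
\]
Thus the contribution of each item $a$ to the total welfare depends only on the single agent $\pi(a)$ to whom $a$ is allocated, and equals $c(a,j):=\sum_{i\in\agents}V_i(j,a)$ when $a$ goes to agent $j$. Crucially, these per-item contributions are completely independent across items.

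Second, this independence makes welfare maximization a trivial greedy computation: for each item $a$ separately, assign it to some agent $j^\star(a)$ that maximizes $c(a,j)$ over $j\in\agents$ (ties broken arbitrarily), and call the resulting allocation $\pi^\star$. Since the welfare of any allocation is the sum of its per-item contributions and these terms are independent, the maximum welfare equals the sum of the per-item maxima, which $\pi^\star$ attains; hence $\pi^\star$ maximizes $\operatorname{SW}$. Each value $c(a,j)$ is computable in $\Oh{|\agents|}$ time, so all of them, together with the maximizers, are found in $\Oh{|\items|\cdot|\agents|^2}$ time overall, which is polynomial.

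Finally, I would argue that $\pi^\star$ is also Pareto optimal, so a single allocation meets both requirements. Suppose toward a contradiction that some $\pi'$ Pareto dominates $\pi^\star$, i.e.\ $V_i(\pi')\ge V_i(\pi^\star)$ for every $i\in\agents$ with at least one strict inequality. Summing these inequalities over all agents yields $\operatorname{SW}(\pi')>\operatorname{SW}(\pi^\star)$, contradicting the welfare-optimality of $\pi^\star$; hence no such $\pi'$ exists. The only point requiring care is that Pareto optimality here is defined through the full utilities $V_i$, which already incorporate the externalities, rather than through the received bundles; but since the domination argument uses nothing beyond summing the $V_i$, the externalities play no role. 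Indeed, the single real observation behind the whole proof is that externalities do not destroy the item-wise decomposition of welfare---once that is seen, both the greedy algorithm and Pareto optimality follow immediately, so there is no substantial obstacle.
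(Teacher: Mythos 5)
Your proof is correct and follows essentially the same route as the paper: a greedy per-item assignment to the agent maximizing the total value $\sum_{i\in\agents}V_i(j,a)$ (which is exactly the paper's ``allocate the item so that the increase in overall utility is highest''), followed by the standard observation that a Pareto improvement would strictly increase utilitarian welfare. You are merely more explicit about the item-wise decomposition of welfare, which the paper leaves implicit behind the word ``additive.''
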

\begin{proof}
    First, we show how to find an allocation which maximizes social welfare. We iterate over all items, and we allocate this item to an agent so that the increase in overall utility is the highest possible. In the case of ties, we allocate the item arbitrarily. Since the valuations are additive, this algorithm clearly produces allocation maximizing utilitarian social welfare, and the running time is $\Oh{|\items|\cdot|\agents|}$. Now, we show that this allocation is Pareto optimal. Let $\pi$ be an allocation produced by the previous algorithm and assume that it is not PO. Then, there exists an allocation $\pi'$ such that everyone's utility is at least the utility in $\pi$ and at least one agent is strictly better. Therefore, $SW(\pi') > SW(\pi)$, which contradicts $\pi$ being the MSW allocation.
\end{proof}

Nevertheless, the most appealing question for future research is the (non-)existence of EF1 allocations under binary or general valuations, even if we have only three agents. We conjecture that for binary valuations, EF1 allocation always exists -- and, therefore, due to our results, also EFX allocations always exist in this setting. However, despite many attempts, the proof seems highly non-trivial as already with the no-chores assumption, \citet{AzizLRS2023} used many branching rules, ramified case distinction, and even computer program to verify some branches.
For general valuations, we are more skeptical.

\section*{Acknowledgements}

This work was co-funded by the European Union under the project Robotics and advanced industrial production (reg. no. CZ.02.01.01/00/22\_008/0004590).
Argyrios Deligkas acknowledges the support of the EPSRC grant EP/X039862/1.
Viktoriia Korchemna acknowledges the support of the Austrian Science Fund (FWF, project Y1329). 
Šimon Schierreich acknowledges the additional support of the Grant Agency of the Czech Technical University in Prague, grant No. SGS23/205/OHK3/3T/18.

\bibliographystyle{plainnat}
\bibliography{references}

\begin{thebibliography}{48}
\providecommand{\natexlab}[1]{#1}
\providecommand{\url}[1]{\texttt{#1}}
\expandafter\ifx\csname urlstyle\endcsname\relax
  \providecommand{\doi}[1]{doi: #1}\else
  \providecommand{\doi}{doi: \begingroup \urlstyle{rm}\Url}\fi

\bibitem[Amanatidis et~al.(2021)Amanatidis, Birmpas, Filos-Ratsikas, Hollender,
  and Voudouris]{AmanatidisBFHV2021}
Georgios Amanatidis, Georgios Birmpas, Aris Filos-Ratsikas, Alexandros
  Hollender, and Alexandros~A. Voudouris.
\newblock Maximum {N}ash welfare and other stories about {EFX}.
\newblock \emph{Theoretical Computer Science}, 863:\penalty0 69--85, 2021.
\newblock ISSN 0304-3975.
\newblock \doi{10.1016/j.tcs.2021.02.020}.

\bibitem[Amanatidis et~al.(2023)Amanatidis, Aziz, Birmpas, Filos{-}Ratsikas,
  Li, Moulin, Voudouris, and Wu]{AmanatidisABFLMVW2023}
Georgios Amanatidis, Haris Aziz, Georgios Birmpas, Aris Filos{-}Ratsikas,
  Bo~Li, Herv{\'{e}} Moulin, Alexandros~A. Voudouris, and Xiaowei Wu.
\newblock Fair division of indivisible goods: Recent progress and open
  questions.
\newblock \emph{Artificial Intelligence}, 322:\penalty0 103965, 2023.
\newblock \doi{10.1016/j.artint.2023.103965}.

\bibitem[Aziz et~al.(2021)Aziz, Caragiannis, Igarashi, and Walsh]{AzizCIW2021}
Haris Aziz, Ioannis Caragiannis, Ayumi Igarashi, and Toby Walsh.
\newblock Fair allocation of indivisible goods and chores.
\newblock \emph{Autonomous Agents and Multi-Agent Systems}, 36\penalty0
  (1):\penalty0 3, Nov 2021.
\newblock ISSN 1573-7454.
\newblock \doi{10.1007/s10458-021-09532-8}.

\bibitem[Aziz et~al.(2023{\natexlab{a}})Aziz, Lindsay, Ritossa, and
  Suzuki]{AzizLRS2023}
Haris Aziz, Jeremy Lindsay, Angus Ritossa, and Mashbat Suzuki.
\newblock Fair allocation of two types of chores.
\newblock In Noa Agmon, Bo~An, Alessandro Ricci, and William Yeoh, editors,
  \emph{Proceedings of the 22nd International Conference on Autonomous Agents
  and Multiagent Systems, {AAMAS}~'23}, pages 143--151, Richland, SC,
  2023{\natexlab{a}}. IFAAMAS.

\bibitem[Aziz et~al.(2023{\natexlab{b}})Aziz, Suksompong, Sun, and
  Walsh]{efx-externalities}
Haris Aziz, Warut Suksompong, Zhaohong Sun, and Toby Walsh.
\newblock Fairness concepts for indivisible items with externalities.
\newblock In Brian Williams, Yiling Chen, and Jennifer Neville, editors,
  \emph{Proceedings of the 37th {AAAI} Conference on Artificial Intelligence,
  {AAAI}~'23}, volume 37, part 5, pages 5472--5480. AAAI Press,
  2023{\natexlab{b}}.
\newblock \doi{10.1609/aaai.v37i5.25680}.

\bibitem[Babaioff et~al.(2021)Babaioff, Ezra, and Feige]{BabaioffEF2021}
Moshe Babaioff, Tomer Ezra, and Uriel Feige.
\newblock Fair and truthful mechanisms for dichotomous valuations.
\newblock In \emph{Proceedings of the 35th {AAAI} Conference on Artificial
  Intelligence, {AAAI}~'21}, volume 35, part 6, pages 5119--5126. {AAAI} Press,
  2021.
\newblock \doi{10.1609/aaai.v35i6.16647}.

\bibitem[Barman et~al.(2018)Barman, Krishnamurthy, and Vaish]{BarmanKV2018}
Siddharth Barman, Sanath~Kumar Krishnamurthy, and Rohit Vaish.
\newblock Greedy algorithms for maximizing {N}ash social welfare.
\newblock In Elisabeth Andr{\'{e}}, Sven Koenig, Mehdi Dastani, and Gita
  Sukthankar, editors, \emph{Proceedings of the 17th International Conference
  on Autonomous Agents and Multiagent Systems, {AAMAS}~'18}, pages 7--13,
  Richland, SC, 2018. IFAAMAS.
\newblock URL \url{http://dl.acm.org/citation.cfm?id=3237392}.

\bibitem[Berger et~al.(2022)Berger, Cohen, Feldman, and Fiat]{BergerCFF2022}
Ben Berger, Avi Cohen, Michal Feldman, and Amos Fiat.
\newblock Almost full {EFX} exists for four agents.
\newblock In \emph{Proceedings of the 36th {AAAI} Conference on Artificial
  Intelligence, {AAAI}~'22}, volume 36, part 5, pages 4826--4833. {AAAI} Press,
  2022.
\newblock \doi{10.1609/aaai.v36i5.20410}.

\bibitem[Bla{\v{z}}ej et~al.(2023)Bla{\v{z}}ej, Ganian, Knop, Pokorn{\'{y}},
  Schierreich, and Simonov]{BlazejGKPSS2023}
V{\'{a}}clav Bla{\v{z}}ej, Robert Ganian, Du{\v{s}}an Knop, Jan Pokorn{\'{y}},
  {\v{S}}imon Schierreich, and Kirill Simonov.
\newblock The parameterized complexity of network microaggregation.
\newblock In Brian Williams, Yiling Chen, and Jennifer Neville, editors,
  \emph{Proceedings of the 37th {AAAI} Conference on Artificial Intelligence,
  {AAAI}~'23}, volume 37, part 5, pages 6262--6270. {AAAI} Press, 2023.
\newblock \doi{10.1609/aaai.v37i5.25771}.

\bibitem[Bouveret and Lang(2008)]{BouveretL08}
Sylvain Bouveret and J{\'{e}}r{\^{o}}me Lang.
\newblock Efficiency and envy-freeness in fair division of indivisible goods:
  Logical representation and complexity.
\newblock \emph{Journal of Artificial Intelligence Research}, 32:\penalty0
  525--564, 2008.
\newblock \doi{10.1613/jair.2467}.

\bibitem[Br{\^{a}}nzei et~al.(2013)Br{\^{a}}nzei, Procaccia, and
  Zhang]{BranzeiPZ13}
Simina Br{\^{a}}nzei, Ariel~D. Procaccia, and Jie Zhang.
\newblock Externalities in cake cutting.
\newblock In Francesca Rossi, editor, \emph{Proceedings of the 23rd
  International Joint Conference on Artificial Intelligence, {IJCAI}~'13},
  pages 55--61. {IJCAI/AAAI}, 2013.
\newblock URL \url{https://www.ijcai.org/Abstract/13/019}.

\bibitem[Bredereck et~al.(2019)Bredereck, Kaczmarczyk, Knop, and
  Niedermeier]{BredereckKKN2019}
Robert Bredereck, Andrzej Kaczmarczyk, Du\v{s}an Knop, and Rolf Niedermeier.
\newblock High-multiplicity fair allocation: Lenstra empowered by {N}-fold
  integer programming.
\newblock In Anna Karlin, Nicole Immorlica, and Ramesh Johari, editors,
  \emph{Proceedings of the 20th {ACM} Conference on Economics and Computation,
  EC~'19}, page 505–523, New York, NY, USA, 2019. ACM.
\newblock \doi{10.1145/3328526.3329649}.

\bibitem[Budish(2011)]{Budish11}
Eric Budish.
\newblock The combinatorial assignment problem: Approximate competitive
  equilibrium from equal incomes.
\newblock \emph{Journal of Political Economy}, 119\penalty0 (6):\penalty0
  1061--1103, 2011.

\bibitem[Bui et~al.(1987)Bui, Chaudhuri, Leighton, and Sipser]{bui1987graph}
Thang~Nguyen Bui, Soma Chaudhuri, Frank~Thomson Leighton, and Michael Sipser.
\newblock Graph bisection algorithms with good average case behavior.
\newblock \emph{Combinatorica}, 7\penalty0 (2):\penalty0 171--191, 1987.

\bibitem[Caragiannis et~al.(2019)Caragiannis, Kurokawa, Moulin, Procaccia,
  Shah, and Wang]{CaragiannisKMPSW19}
Ioannis Caragiannis, David Kurokawa, Herv{\'e} Moulin, Ariel~D Procaccia,
  Nisarg Shah, and Junxing Wang.
\newblock The unreasonable fairness of maximum {N}ash welfare.
\newblock \emph{{ACM} Transactions on Economics and Computation}, 7\penalty0
  (3):\penalty0 1--32, 2019.

\bibitem[Chaudhury et~al.(2020)Chaudhury, Garg, and Mehlhorn]{ChaudhuryGM2020}
Bhaskar~Ray Chaudhury, Jugal Garg, and Kurt Mehlhorn.
\newblock {EFX} exists for three agents.
\newblock In Péter Biró, Jason~D. Hartline, Michael Ostrovsky, and Ariel~D.
  Procaccia, editors, \emph{Proceedings of the 21st ACM Conference on Economics
  and Computation, {EC}~'20}, pages 1--19, New York, NY, USA, 2020. ACM.
\newblock \doi{10.1145/3391403.3399511}.

\bibitem[Chen et~al.(2005)Chen, Chor, Fellows, Huang, Juedes, Kanj, and
  Xia]{ChenCFHJKX05}
Jianer Chen, Benny Chor, Mike Fellows, Xiuzhen Huang, David~W. Juedes, Iyad~A.
  Kanj, and Ge~Xia.
\newblock Tight lower bounds for certain parameterized {NP}-hard problems.
\newblock \emph{Information and Computation}, 201\penalty0 (2):\penalty0
  216--231, 2005.
\newblock \doi{10.1016/J.IC.2005.05.001}.

\bibitem[Cygan et~al.(2015)Cygan, Fomin, Kowalik, Lokshtanov, Marx, Pilipczuk,
  Pilipczuk, and Saurabh]{CyganFKLMPPS2015}
Marek Cygan, Fedor~V. Fomin, {\L{}}ukasz Kowalik, Daniel Lokshtanov,
  D{\'{a}}niel Marx, Marcin Pilipczuk, Micha{\l{}} Pilipczuk, and Saket
  Saurabh.
\newblock \emph{Parameterized Algorithms}.
\newblock Springer, Cham, 2015.
\newblock ISBN 978-3-319-21274-6.
\newblock \doi{10.1007/978-3-319-21275-3}.

\bibitem[Deligkas et~al.(2021)Deligkas, Eiben, Ganian, Hamm, and
  Ordyniak]{DeligkasEGHO2021}
Argyrios Deligkas, Eduard Eiben, Robert Ganian, Thekla Hamm, and Sebastian
  Ordyniak.
\newblock The parameterized complexity of connected fair division.
\newblock In Zhi{-}Hua Zhou, editor, \emph{Proceedings of the 30th
  International Joint Conference on Artificial Intelligence, {IJCAI}~'21},
  pages 139--145. ijcai.org, 2021.
\newblock \doi{10.24963/ijcai.2021/20}.

\bibitem[Deligkas et~al.(2024)Deligkas, Eiben, Korchemna, and
  Schierreich]{DeligkasEKS2024}
Argyrios Deligkas, Eduard Eiben, Viktoriia Korchemna, and {\v{S}}imon
  Schierreich.
\newblock The complexity of fair division of indivisible items with
  externalities.
\newblock In Michael Wooldridge, Jennifer Dy, and Sriraam Natarajan, editors,
  \emph{Proceedings of the 38th AAAI Conference on Artificial Intelligence,
  {AAAI}~'24}, volume 38, part 9, pages 9653--9661. {AAAI} Press, March 2024.
\newblock \doi{10.1609/aaai.v38i9.28822}.

\bibitem[Downey and Fellows(2013)]{DowneyF2013}
Rodney~G. Downey and Michael~R. Fellows.
\newblock \emph{Fundamentals of Parameterized Complexity}.
\newblock Texts in Computer Science. Springer, London, 2013.
\newblock ISBN 978-1-4471-5558-4.
\newblock \doi{10.1007/978-1-4471-5559-1}.

\bibitem[Ebadian et~al.(2022)Ebadian, Peters, and Shah]{EbadianPS2022}
Soroush Ebadian, Dominik Peters, and Nisarg Shah.
\newblock How to fairly allocate easy and difficult chores.
\newblock In Piotr Faliszewski, Viviana Mascardi, Catherine Pelachaud, and
  Matthew~E. Taylor, editors, \emph{Proceedings of the 21st International
  Conference on Autonomous Agents and Multiagent Systems, {AAMAS}~'22}, page
  372–380, Richland, SC, 2022. IFAAMAS.

\bibitem[Fellows et~al.(2009)Fellows, Hermelin, Rosamond, and
  Vialette]{FellowsHRV2009}
Michael~R. Fellows, Danny Hermelin, Frances~A. Rosamond, and St{\'{e}}phane
  Vialette.
\newblock On the parameterized complexity of multiple-interval graph problems.
\newblock \emph{Theoretical Computer Science}, 410\penalty0 (1):\penalty0
  53--61, January 2009.
\newblock \doi{10.1016/j.tcs.2008.09.065}.

\bibitem[Frank and Tardos(1987)]{FrankT1987}
Andr{\'{a}}s Frank and {\'{E}}va Tardos.
\newblock An application of simultaneous {D}iophantine approximation in
  combinatorial optimization.
\newblock \emph{Combinatorica}, 7\penalty0 (1):\penalty0 49--65, 1987.
\newblock \doi{10.1007/BF02579200}.

\bibitem[Freeman et~al.(2019)Freeman, Sikdar, Vaish, and Xia]{FreemanSVX2019}
Rupert Freeman, Sujoy Sikdar, Rohit Vaish, and Lirong Xia.
\newblock Equitable allocations of indivisible goods.
\newblock In Sarit Kraus, editor, \emph{Proceedings of the 28th International
  Joint Conference on Artificial Intelligence, {IJCAI}~'19}, pages 280--286.
  ijcai.org, 2019.
\newblock \doi{10.24963/ijcai.2019/40}.

\bibitem[Ganian and Korchemna(2021)]{GanianK2021}
Robert Ganian and Viktoriia Korchemna.
\newblock The complexity of bayesian network learning: Revisiting the
  superstructure.
\newblock In Marc'Aurelio Ranzato, Alina Beygelzimer, Yann~N. Dauphin, Percy
  Liang, and Jennifer~Wortman Vaughan, editors, \emph{Proceedings of the 35th
  Conference on Neural Information Processing Systems, {NeurIPS~'21}}, pages
  430--442, 2021.
\newblock URL
  \url{https://proceedings.neurips.cc/paper/2021/hash/040a99f23e8960763e680041c601acab-Abstract.html}.

\bibitem[Garey and Johnson(1979)]{GareyJohnson79}
Michael~R. Garey and David~R. Johnson.
\newblock \emph{Computers and Intractability: A Guide to the Theory of
  NP-Completeness}.
\newblock W. H. Freeman, San Francisco, 1979.
\newblock ISBN 0-7167-1044-7.

\bibitem[Garg and Murhekar(2023)]{GargM2023}
Jugal Garg and Aniket Murhekar.
\newblock Computing fair and efficient allocations with few utility values.
\newblock \emph{Theoretical Computer Science}, 962:\penalty0 113932, 2023.
\newblock ISSN 0304-3975.
\newblock \doi{10.1016/j.tcs.2023.113932}.

\bibitem[Garg et~al.(2022)Garg, Murhekar, and Qin]{GargMQ2022}
Jugal Garg, Aniket Murhekar, and John Qin.
\newblock Fair and efficient allocations of chores under bivalued preferences.
\newblock In \emph{Proceedings of the 36th {AAAI} Conference on Artificial
  Intelligence, {AAAI}~'22}, volume 36, part 5, pages 5043--5050. {AAAI} Press,
  2022.
\newblock \doi{10.1609/aaai.v36i5.20436}.

\bibitem[Ghosal et~al.(2023)Ghosal, HV, Nimbhorkar, and Varma]{GhosalPNV2023}
Pratik Ghosal, Vishwa~Prakash HV, Prajakta Nimbhorkar, and Nithin Varma.
\newblock {EFX} exists for four agents with three types of valuations.
\newblock \emph{CoRR}, abs/2301.10632, 2023.
\newblock \doi{10.48550/arXiv.2301.10632}.

\bibitem[Goldberg et~al.(2023)Goldberg, H{\o}gh, and Hollender]{GoldbergHH2023}
Paul~W. Goldberg, Kasper H{\o}gh, and Alexandros Hollender.
\newblock The frontier of intractability for {EFX} with two agents.
\newblock In Argyrios Deligkas and Aris Filos{-}Ratsikas, editors,
  \emph{Proceedings of the 16th International Symposium on Algorithmic Game
  Theory, {SAGT}~'23}, volume 14238 of \emph{Lecture Notes in Computer
  Science}, pages 290--307. Springer, 2023.
\newblock \doi{10.1007/978-3-031-43254-5\_17}.

\bibitem[Gorantla et~al.(2023)Gorantla, Marwaha, and Velusamy]{GorantlaMV2023}
Pranay Gorantla, Kunal Marwaha, and Santhoshini Velusamy.
\newblock Fair allocation of a multiset of indivisible items.
\newblock In Nikhil Bansal and Viswanath Nagarajan, editors, \emph{Proceedings
  of the 34th {ACM-SIAM} Symposium on Discrete Algorithms, {SODA}~'23}, pages
  304--331. {SIAM}, 2023.
\newblock \doi{10.1137/1.9781611977554.ch13}.

\bibitem[Halpern et~al.(2020)Halpern, Procaccia, Psomas, and
  Shah]{HalpernPPS2020}
Daniel Halpern, Ariel~D. Procaccia, Alexandros Psomas, and Nisarg Shah.
\newblock Fair division with binary valuations: One rule to rule them all.
\newblock In Xujin Chen, Nikolai Gravin, Martin Hoefer, and Ruta Mehta,
  editors, \emph{Proceedings of the 16th International Conference on Web and
  Internet Economics, {WINE}~'20}, volume 12495 of \emph{Lecture Notes in
  Computer Science}, pages 370--383. Springer, 2020.
\newblock \doi{10.1007/978-3-030-64946-3\_26}.

\bibitem[Hosseini et~al.(2023{\natexlab{a}})Hosseini, Mammadov, and
  W{\k{a}}s]{HosseiniMW2023}
Hadi Hosseini, Aghaheybat Mammadov, and Tomasz W{\k{a}}s.
\newblock Fairly allocating goods and (terrible) chores.
\newblock In Edith Elkind, editor, \emph{Proceedings of the 32nd International
  Joint Conference on Artificial Intelligence, {IJCAI~'23}}, pages 2738--2746.
  ijcai.org, 2023{\natexlab{a}}.
\newblock \doi{10.24963/ijcai.2023/305}.

\bibitem[Hosseini et~al.(2023{\natexlab{b}})Hosseini, Sikdar, Vaish, and
  Xia]{HosseiniSVX23-lexicographic}
Hadi Hosseini, Sujoy Sikdar, Rohit Vaish, and Lirong Xia.
\newblock Fairly dividing mixtures of goods and chores under lexicographic
  preferences.
\newblock In Noa Agmon, Bo~An, Alessandro Ricci, and William Yeoh, editors,
  \emph{Proceedings of the 22nd International Conference on Autonomous Agents
  and Multiagent Systems, AAMAS~'23}, pages 152--160, Richland, SC,
  2023{\natexlab{b}}. IFAAMAS.

\bibitem[Igarashi et~al.(2017)Igarashi, Bredereck, and Elkind]{IgarashiBE2017}
Ayumi Igarashi, Robert Bredereck, and Edith Elkind.
\newblock On parameterized complexity of group activity selection problems on
  social networks.
\newblock In Kate Larson, Michael Winikoff, Sanmay Das, and Edmund~H. Durfee,
  editors, \emph{Proceedings of the 16th Conference on Autonomous Agents and
  Multiagent Systems, {AAMAS}~'17}, pages 1575--1577. {ACM}, 2017.
\newblock URL \url{http://dl.acm.org/citation.cfm?id=3091367}.

\bibitem[Igarashi et~al.(2023)Igarashi, Kawase, Suksompong, and
  Sumita]{IgarashiKSS2023}
Ayumi Igarashi, Yasushi Kawase, Warut Suksompong, and Hanna Sumita.
\newblock Fair division with two-sided preferences.
\newblock In Edith Elkind, editor, \emph{Proceedings of the 32nd International
  Joint Conference on Artificial Intelligence, {IJCAI~'23}}, pages 2756--2764.
  ijcai.org, 8 2023.
\newblock \doi{10.24963/ijcai.2023/307}.

\bibitem[Kannan(1987)]{Kannan1987}
Ravi Kannan.
\newblock Minkowski's convex body theorem and integer programming.
\newblock \emph{Mathematics of Operations Research}, 12\penalty0 (3):\penalty0
  415--440, 1987.
\newblock \doi{10.1287/moor.12.3.415}.

\bibitem[Kronegger et~al.(2014)Kronegger, Lackner, Pfandler, and
  Pichler]{KroneggerLPP2014}
Martin Kronegger, Martin Lackner, Andreas Pfandler, and Reinhard Pichler.
\newblock A parameterized complexity analysis of generalized {CP}-nets.
\newblock In Carla~E. Brodley and Peter Stone, editors, \emph{Proceedings of
  the 28th {AAAI} Conference on Artificial Intelligence, {AAAI}~'14}, pages
  1091--1097. {AAAI} Press, 2014.
\newblock \doi{10.1609/aaai.v28i1.8859}.

\bibitem[{Lenstra Jr.}(1983)]{Lenstra1983}
Hendrik~W. {Lenstra Jr.}
\newblock Integer programming with a fixed number of variables.
\newblock \emph{Mathematics of Operations Research}, 8\penalty0 (4):\penalty0
  538--548, 1983.
\newblock \doi{10.1287/moor.8.4.538}.

\bibitem[Li et~al.(2015)Li, Zhang, and Zhang]{LiZZ15}
Minming Li, Jialin Zhang, and Qiang Zhang.
\newblock Truthful cake cutting mechanisms with externalities: Do not make them
  care for others too much!
\newblock In Qiang Yang and Michael~J. Wooldridge, editors, \emph{Proceedings
  of the 24th International Joint Conference on Artificial Intelligence,
  {IJCAI}~'15}, pages 589--595. {AAAI} Press, 2015.
\newblock URL \url{http://ijcai.org/Abstract/15/089}.

\bibitem[Lipton et~al.(2004)Lipton, Markakis, Mossel, and Saberi]{LiptonMMS04}
Richard~J Lipton, Evangelos Markakis, Elchanan Mossel, and Amin Saberi.
\newblock On approximately fair allocations of indivisible goods.
\newblock In Jack~S. Breese, Joan Feigenbaum, and Margo~I. Seltzer, editors,
  \emph{Proceedings of the 5th ACM Conference on Electronic Commerce,
  {EC}~'04}, pages 125--131. {ACM}, 2004.
\newblock \doi{10.1145/988772.988792}.

\bibitem[Nguyen and Rothe(2023)]{NguyenR2023}
Trung~Thanh Nguyen and J{\"{o}}rg Rothe.
\newblock Complexity results and exact algorithms for fair division of
  indivisible items: {A} survey.
\newblock In \emph{Proceedings of the 32nd International Joint Conference on
  Artificial Intelligence, {IJCAI}~'23}, pages 6732--6740. ijcai.org, 2023.
\newblock \doi{10.24963/ijcai.2023/754}.

\bibitem[Niedermeier(2006)]{Niedermeier2006}
Rolf Niedermeier.
\newblock \emph{Invitation to Fixed-Parameter Algorithms}.
\newblock Oxford Lecture Series in Mathematics and its Applications. Oxford
  University Press, 2006.
\newblock ISBN 978-0-1985-6607-6.
\newblock \doi{10.1093/acprof:oso/9780198566076.001.0001}.

\bibitem[Pietrzak(2003)]{Pietrzak2003}
Krzysztof Pietrzak.
\newblock On the parameterized complexity of the fixed alphabet shortest common
  supersequence and longest common subsequence problems.
\newblock \emph{Journal of Computer and System Sciences}, 67\penalty0
  (4):\penalty0 757--771, December 2003.
\newblock \doi{10.1016/S0022-0000(03)00078-3}.

\bibitem[Plaut and Roughgarden(2020)]{PlautR2020}
Benjamin Plaut and Tim Roughgarden.
\newblock Almost envy-freeness with general valuations.
\newblock \emph{{SIAM} Journal on Discrete Mathematics}, 34\penalty0
  (2):\penalty0 1039--1068, 2020.
\newblock \doi{10.1137/19M124397X}.

\bibitem[Suksompong and Teh(2022)]{SuksompongT2022}
Warut Suksompong and Nicholas Teh.
\newblock On maximum weighted {N}ash welfare for binary valuations.
\newblock \emph{Mathematical Social Sciences}, 117:\penalty0 101--108, 2022.
\newblock ISSN 0165-4896.
\newblock \doi{10.1016/j.mathsocsci.2022.03.004}.

\bibitem[Velez(2016)]{velez2016fairness}
Rodrigo~A. Velez.
\newblock Fairness and externalities.
\newblock \emph{Theoretical Economics}, 11\penalty0 (1):\penalty0 381--410,
  2016.
\newblock ISSN 1555-7561.
\newblock \doi{10.3982/TE1651}.

\end{thebibliography}

\end{document}